\newcommand{\ourproblem}[1]{\begin{mdframed} #1 \end{mdframed}}
\newcommand{\rank}{{\mathrm{rank}}}
\newcommand{\HROTDIST}{\textrm{\sc RotDist}_H}
\newcommand{\RROTDIST}{\textrm{\sc RotDist}_R}
\newcounter{todocounter}
\begin{document}

%%%%%%%  parameters to be filled in by copy-editor  %%%%%%%%%%

\setcounter{page}{79}
\publyear{24}
\papernumber{2173}
\volume{191}
\issue{2}

\finalVersionForARXIV
%\finalVersionForIOS
%%%%%%%%%%%%%%%%%%%%%%%%%%%%%%%%%%

\title{On Rotation Distance of Rank Bounded Trees\footnote{A preliminary version of the
                   paper containing a subset of results appeared in the proceedings of the 28th International Computing and
                   Combinatorics Conference (COCOON 2022).}}

\author{Anoop S.K.M. \ Jayalal Sarma$^*$\thanks{Corresponding author: Department of Computer Science and Engineering,
                          Indian Institute of Technology Madras (IIT Madras), Chennai, India. \newline \newline
                    \vspace*{-6mm}{\scriptsize{Received April 2023; \ accepted March 2024.}}}
\\
Department of Computer Science and Engineering\\
Indian Institute of Technology Madras (IIT Madras)\\
Chennai, India\\
\{skmanoop | jayalal\}@cse.iitm.ac.in }

%\institute{}

\runninghead{S.K.M. Anoop and J. Sarma}{On Rotation Distance of Rank Bounded Trees}

\maketitle

\begin{abstract}
Computing the rotation distance between two binary trees with $n$ internal nodes efficiently (in $\poly(n)$ time) is a long standing open question in the study of height balancing in tree data structures. In this paper, we initiate the study of this problem bounding the rank of the trees given at the input (defined in \cite{EH89} in the context of decision trees).

We define the \textit{rank-bounded rotation distance} between two given full binary trees $T_1$ and $T_2$ (with $n$ internal nodes) of rank at most $r = \max\{\rank(T_1),\rank(T_2)\}$, denoted by $d_R(T_1,T_2)$, as the length of the shortest sequence of rotations that transforms $T_1$ to $T_2$ with the restriction that the intermediate trees must be of rank at most $r$. We show that the rotation distance problem reduces in polynomial time to the rank bounded rotation distance problem. This motivates the study of the problem in the combinatorial and algorithmic frontiers. Observing that trees with rank $1$ coincide exactly with skew trees (full binary trees where every internal node has at least one leaf as a child), we show the following results in this frontier:
\begin{itemize}
\item We present an $O(n^2)$ time algorithm for computing $d_R(T_1,T_2)$. That is, when the given full binary trees are skew trees (we call this variant the \textit{skew rotation distance problem}) - where the intermediate trees are restricted to be skew as well. In particular, our techniques imply that for any two skew trees $d_R(T_1,T_2) \le n^2$.
\item We show the following upper bound: for any two full binary trees $T_1$ and $T_2$ of rank  $r_1$ and $r_2$ respectively, we have that: $d_R(T_1,T_2) \le n^2 (1+(2n+1)(r_1+r_2-2))$ where $r = \max\{r_1,r_2\}$. This bound is asymptotically tight for $r=1$.
\end{itemize}

En route to our proof of the above theorems, we associate full binary trees to permutations and relate the rotation operation on trees to  transpositions in the corresponding permutations. We give exact combinatorial characterizations of permutations that correspond to full binary trees and full skew binary trees under this association. We also precisely characterize the transpositions that correspond to the set of rotations in full binary trees. We also study bi-variate polynomials associated with binary trees (introduced by \cite{WG14}), and show characterizations and algorithms for computing rotation distances for the case of full skew trees using them.
\end{abstract}

\section{Introduction}
\label{sec:intro}

Rotation of full binary trees (every internal node has exactly two children, we refer to such trees simply as binary trees) is an important operation due to their classical applications in data structures.
A right~(resp. left) rotation of a full binary tree at an internal node consists of moving the right~(resp. left) subtree of the left~(resp. right) child of the node as the left~(resp. right) child of the node.
The \textit{rotation distance} between two full binary trees $d(T_1,T_2)$ is the minimum number of rotations required to transform $T_1$ to $T_2$.

\smallskip
There are two natural questions in this context. The first one is about proving (tight) upper and lower bounds for this distance measure - what can be the maximum distance between two trees having $n$ internal nodes. The second one is computational - given two  full binary trees, $T_1$ and $T_2$, with $n$ internal nodes, compute the rotation distance between them.
The decision version (we call it the {\sc RotDist} problem in the rest of the paper)
is: given $T_1, T_2$ and an integer $k \in \mathbb{N}$ check if $d(T_1,T_2) \le~k$.

\smallskip
Every full binary tree with $n$ internal nodes, by systematically applying at most $n-1$ right rotations at the internal nodes on the right path, can be converted to one where each internal node has an internal node as the right child and leaf as its left child.  Such a tree is called a \textit{right comb} (symmetrically the term \textit{left comb}). Since each such rotation can also be undone by symmetrically opposite left or right rotation, this gives that $d(T_1, T_2) \le 2n-2$ for any pair of trees $T_1$ and $T_2$ with $n$ internal nodes. The rotation distance between two trees is at most $2n-2$. For every $n$, does there exist two trees $T_1$, $T_2$ with $n$ internal nodes such that $d(T_1,T_2) = 2n-2$? This question has attracted a lot of attention in the literature~\cite{CW82}\cite{Deh09} and a lot of surprising connections were discovered while trying to answer the same~\cite{STT86}. Through a sequence of improvements, explicit trees were constructed~\cite{Deh09} for every $n \in \mathbb{N}$ such that $d(T_1,T_2) \ge 2n-O(\sqrt{n})$~\cite{Deh09}. \cite{STT86} initially conjectured the lower bound to be $2n-10$. The conjecture was later proved by~\cite{Pou14}.

\smallskip
The computational version of the rotation distance problem is still evasive despite a lot of attempts~(see \cite{CJ10} and references therein). In general, it is a long standing open question~\cite{STT86} whether the problem of computing $d(T_1,T_2)$ is $\NP$-hard or not. The best known approximation algorithm known for the optimization problem is a $2$-approximation algorithm~\cite{CJ10}. A fixed parameter tractable algorithm is also known~\cite{CJ10} for the problem when parameterized by rotation distance. Special cases are known to be linear time solvable when one of the trees is restricted to an angle tree (a generalization of the right and left combs)\cite{Luc04}.

\smallskip
A particularly important view of the computational problem is viewing it as a shortest path problem in an associated graph. For every $n \in \mathbb{N}$, consider the graph ${\cal R}_n$ with the vertex set as the set of all full binary trees with $n$ internal nodes and edges defined as follows: two vertices have an edge between them if the corresponding trees are at rotation distance $1$ apart. With this formulation, rotation distance is exactly the shortest path between two  given nodes (corresponding to the two given trees $T_1$ and $T_2$). The graph has exponential (in $n$) number of vertices (more precisely, the $n$-th Catalan number $C_n$) and each vertex has degree exactly $n-1$ since one rotation is associated with each internal node except the root node. The graph has logarithmic diameter (in terms of number of vertices, $C_n$) and the graph is also known to be hamiltonian~\cite{Lucas1987}. \\[-3mm]

\smallskip
\noindent{\bf Our Results:} We study the rotation distance problem by restricting a parameter of the tree known as the rank of full binary trees (introduced in the context of decision trees \cite{EH89}). The rank of a full binary tree is defined inductively as follows: the rank of a node at the leaf is $0$, and the rank of any node with two children is the maximum of the rank of the children nodes if their ranks are unequal, and as one plus the rank of the children if both children have equal rank. It can also be seen that rank is exactly the maximum height of the complete binary tree that can be obtained from the given tree by a sequence of edge contractions (and hence a minor of the given tree, since deletion of vertices will disconnect the tree). \cite{CG13} also provides a characterization of the rank of a binary tree using path decompositions.

\medskip
Towards our study, we formulate the following computational variant of the rotation distance problem (which we denote as the $\RROTDIST$ problem). Consider two full binary trees $T_1$ and $T_2$ and let $r=\max\{\rank(T_1),\rank(T_2)\}$. We define a path from $T_1$ to $T_2$ in the rotation graph (${\cal R}_n)$ to be a \textit{rank-bounded path}
%\footnote{For simplicity, we drop $r$ from the term $r$-rank-bounded path whenever it is clear from the context.}
if each intermediate tree that appears in the path has rank at most $r$. We define the \textit{rank bounded rotation distance}, $d_R(T_1,T_2)$, to be the length of the shortest rank-bounded path between the two trees $T_1$ and $T_2$ of rank at most $r$. The $\RROTDIST$ problem is defined as:

\ourproblem{$\RROTDIST$: Given two full binary trees $T_1$ and $T_2$, is the rank bounded rotation distance between the trees at most $k$ or not. That is, is $d_R(T_1,T_2) \le k$? }

%Notice that by one rotation, the rank of the tree can change at most by 1.
We first show that the above variant is interesting since the general problem of computing rotation distance reduces to this problem - we record this as the following theorem.
\begin{restatable}{theorem}{rankboundedrotationdistancereductiontheorem}\label{thm:rankboundedrotationdistancereductiontheorem} {\sc RotDist} many-one reduces to $\RROTDIST$ in polynomial time.
\end{restatable}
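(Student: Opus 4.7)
The plan is to pad the input trees with a common large-rank gadget so that the rank constraint in the $\RROTDIST$ instance becomes vacuous. Given an instance $(T_1,T_2,k)$ of {\sc RotDist} with $n$ internal nodes, I would set $h := \lceil \log_2(n+1)\rceil$, let $G$ be the complete binary tree of height $h$, and form $T_i'$ by grafting $T_i$ in place of the rightmost leaf of $G$, so that $T_i'$ has $N := (2^h - 1) + n$ internal nodes. A direct rank computation along the path from the attachment point up to the root, using $\rank(T_i) \le \lfloor \log_2(n+1)\rfloor \le h$, should yield $\rank(T_i') = h$. On the other hand, because $n < 2^h$ we have $2^h \le N+1 < 2^{h+1}$ and hence $\lfloor \log_2(N+1)\rfloor = h$, i.e., the maximum possible rank over all full binary trees with $N$ internal nodes is also $h$. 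Thus every full binary tree on $N$ internal nodes has rank at most $h$, so the rank bound of $h$ in the $\RROTDIST$ instance is vacuous and $d_R(T_1',T_2') = d(T_1',T_2')$. The reduction then outputs $(T_1',T_2',k)$.

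It remains to show $d(T_1',T_2') = d(T_1,T_2)$. The inequality $d(T_1',T_2') \le d(T_1,T_2)$ is immediate by lifting: any rotation sequence for $T_1 \to T_2$ acts verbatim inside the grafted copy of $T_i$ to produce a rotation sequence for $T_1' \to T_2'$ of the same length, with $G$ untouched. For the reverse inequality I would exploit that rotations preserve the left-to-right order of leaves: the $n+1$ leaves inherited from $T_i$ form a contiguous block $L$ in every intermediate tree $S$. Define $\pi(S)$ to be the induced subtree of $S$ on $L$, i.e., keep only the leaves in $L$ together with their ancestors in $S$ and then contract any resulting chains of degree-$1$ internal nodes; the outcome is a full binary tree on $n$ internal nodes with $\pi(T_1') = T_1$ and $\pi(T_2') = T_2$.

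The crux is a projection lemma: for any single rotation $S \to S'$, either $\pi(S') = \pi(S)$ or $\pi(S')$ and $\pi(S)$ differ by exactly one rotation. I would prove this by writing the rotation at node $v$ schematically as $v(A,u(B,C)) \to u(v(A,B),C)$ and splitting into cases according to which of $A, B, C$ contain leaves of $L$: when all three meet $L$, the induced pictures on $L$ before and after are themselves related by the analogous rotation; and whenever one of $A, B, C$ is disjoint from $L$, the contraction step in the definition of $\pi$ absorbs the rotation, leaving $\pi$ unchanged. Chaining the projection along any rotation sequence of length $\ell$ from $T_1'$ to $T_2'$ then produces a walk of length at most $\ell$ from $T_1$ to $T_2$ in the $n$-node rotation graph, establishing $d(T_1,T_2) \le d(T_1',T_2')$.

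The main obstacle is the careful bookkeeping inside this projection lemma, particularly for ``boundary'' rotations at an ancestor $v$ of the lowest common ancestor of $L$, where the subtree of $v$ contains both $L$-leaves and gadget leaves. One must verify in each such configuration that the contraction of degree-$1$ internal nodes created by the deletion of non-$L$ leaves precisely compensates for the rotation, so that $\pi$ is left invariant; the symmetry in the statement between the three subtrees $A, B, C$ helps, but the case split is what takes the bulk of the argument. Once this case analysis is completed, the construction is clearly polynomial in $n$ (since $h = O(\log n)$ and $|T_i'| = O(n)$), and the correctness of the reduction on $k' = k$ follows from the chain of equalities $d_R(T_1',T_2') = d(T_1',T_2') = d(T_1,T_2)$.
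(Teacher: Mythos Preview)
Your construction is essentially the paper's: both attach $T_i$ to a leaf of a complete binary tree of height roughly $\log_2 n$, producing trees $T_i'$ of a common rank. Your analysis, however, diverges from the paper's in two places.

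First, you observe that the chosen height $h$ equals $\lfloor \log_2(N+1)\rfloor$, which is the maximum possible rank of \emph{any} full binary tree on $N$ internal nodes, so the rank constraint in the $\RROTDIST$ instance is vacuous and $d_R(T_1',T_2') = d(T_1',T_2')$ outright. The paper does not make this observation; it instead argues in the forward direction that rotations confined to the grafted copy of $T_i$ leave the gadget intact (so the rank stays at $r$), and in the reverse direction it only uses the trivial inequality $d \le d_R$. Your vacuity observation is a genuine simplification of that part of the argument.

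Second, for the equality $d(T_1',T_2') = d(T_1,T_2)$, the paper invokes the Sleator--Tarjan--Thurston common-diagonal lemma (Lemma~\ref{lem:slaetorlemma}): every internal node of the gadget is common to $T_1'$ and $T_2'$, so some shortest rotation path never rotates at any gadget node and therefore restricts verbatim to a path $T_1 \to T_2$ of the same length. Your projection lemma is a correct, self-contained substitute for this citation --- the eight-way case split on which of $A,B,C$ contain $L$-leaves does go through (and in fact does not even need contiguity of $L$: in each case either both of $v,u$ survive the contraction, yielding a rotation in $\pi$, or at most one survives, and then $\pi$ is unchanged). But this is essentially reproving a special case of Lemma~\ref{lem:slaetorlemma} by hand. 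If that lemma is available to you, the paper's one-line appeal to it is considerably shorter; your route buys independence from the triangulation machinery at the cost of the case analysis.
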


This motivates a close study of rank-bounded paths and distance between two given trees. We also set out to understand the combinatorial upper bounds for rotation distance with respect to  this new measure. We first observe that rank 1 trees exactly correspond to skew trees (trees in which every internal node has a leaf as a child). This also shows that the trivial bound of $2n-2$ need not be rank bounded, since even the skew trees will become non-skew while being rotated to the right comb tree as indicated in the discussion above.

The problem of computing $d_R(T_1,T_2)$ - given two full skew trees $T_1$ and $T_2$, can we compute the rotation distance between the two trees where each individual intermediate tree in the path is also skew? We call this  the \textit{skew rotation distance problem}. We show a polynomial time algorithm for this problem.

\begin{restatable}{theorem}{skewtreesalgo}
\label{thmalgo:skewtrees}
Given two skew trees $T_1$ and $T_2$ on $n$ internal nodes, the skew rotation distance between $T_1$ and $T_2$ can be computed in time $O(n^2)$.
\end{restatable}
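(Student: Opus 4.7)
The plan is to reduce the skew rotation distance to a simple combinatorial distance on binary strings of length $n-1$. Every skew tree on $n$ internal nodes has a unique \emph{spine}: the path of internal nodes from the root to the unique deepest internal node (the one with both children leaves). At each of the first $n-1$ spine nodes, exactly one child is a leaf and the other is the next spine node, so the tree is completely determined by a string $s \in \{L, R\}^{n-1}$ in which $s_i$ indicates whether the spine child of the $i$-th spine node is on the left or the right.

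A case analysis on the neighborhood of a rotation shows that skew-preserving rotations come in two flavors: (A) a rotation at the $i$-th spine node with $i \le n-2$, which requires $s_i \ne s_{i+1}$ and whose effect is to swap these two symbols; and (B) a rotation at the $(n{-}1)$-th spine node (the parent of the deepest node), whose effect is to flip $s_{n-1}$. Any rotation not of these two types produces a node with two internal children and hence destroys skewness. Thus the skew rotation graph on $n$-node skew trees is isomorphic to the graph on $\{L,R\}^{n-1}$ whose edges are adjacent swaps of unequal symbols together with a single ``end-flip'' edge at position $n-1$.

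The combinatorial heart of the proof is a closed-form formula $d_R(T_1, T_2) = \sum_{i=1}^{n-1} |S_i - T_i|$, where $S_i$ (respectively $T_i$) is the number of $L$'s among the first $i$ symbols of the encoding of $T_1$ (respectively $T_2$). The lower bound follows from the key observation that every allowed operation changes exactly one prefix count by exactly one: a swap at $(i, i{+}1)$ alters only $S_i$, and the end-flip alters only $S_{n-1}$. Hence the number of rotations is at least the $\ell_1$-distance between the two prefix-count vectors. For the matching upper bound I would give a greedy strategy that strictly decreases this $\ell_1$-distance at every step: take the smallest index $i$ where the current prefix count differs from $T_i$; minimality forces the current symbol at position $i$ to be the ``wrong'' one, and then either swap at $(i, i{+}1)$ directly, or, if that swap is blocked because $s_i = s_{i+1}$, scan forward to the closest position where the symbol differs and swap there, or, if the whole suffix is constant, apply the end-flip. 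The main technical point is verifying that this move genuinely decreases the $\ell_1$-distance; this uses the observation that $S_j - T_j$ is single-signed throughout the scanned suffix, a consequence of the minimality of $i$.

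The algorithm then computes the two encodings and their prefix sums and returns $\sum_i |S_i - T_i|$, all in $O(n)$ time, which is well within the claimed $O(n^2)$ bound. Iterating the greedy move additionally produces an explicit optimal rotation sequence in $O(n^2)$ time, since the distance is itself bounded by $\binom{n}{2} = O(n^2)$; this also yields, as a byproduct, the upper bound $d_R(T_1, T_2) \le n^2$ asserted in the theorem.
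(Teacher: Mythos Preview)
Your proposal is correct, and the underlying setup coincides with the paper's: both encode a skew tree by a binary string recording, at each spine node, on which side the spine continues, and both identify the skew-preserving rotations as exactly (i) adjacent swaps of unequal symbols and (ii) a flip at the last position. The paper carries a redundant $n$-th bit forced to differ from the $(n{-}1)$-th, so its ``swap at $(n{-}1,n)$'' is your end-flip; otherwise the move sets are identical, and your greedy, iterated, performs the same sequence of swaps as the paper's Algorithm~1.

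Where you genuinely diverge is in the optimality argument. The paper argues informally that to fix the leftmost mismatch at position $i$ one must bring in the nearest matching symbol from position $j$, costing $j-i$ swaps, and that any optimal sequence must pay this. You instead introduce the prefix-count vector $(S_1,\dots,S_{n-1})$ and observe that every allowed move changes exactly one coordinate by exactly $\pm 1$; this immediately gives the lower bound $\sum_i |S_i-T_i|$, and the same greedy then realises it because each greedy move strictly decreases this potential (your single-signedness observation is exactly what is needed when the swap at $(i,i{+}1)$ is blocked). This buys you a closed-form formula for $d_R(T_1,T_2)$ and hence an $O(n)$ algorithm for the distance itself, strictly improving on the paper's $O(n^2)$; the quadratic time is needed only to output the rotation sequence, whose length is at most $\binom{n}{2}$.
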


We show a distance upper bound for the distance in general.
\begin{restatable}{theorem}{rankdistbound}
\label{thm:rankdistbound}
For any two full binary trees on $n$ internal nodes $T_1$ and $T_2$ of rank $r_1$ and $r_2$ respectively, we have that $d_R(T_1,T_2) \le n^2 (1+(2n+1)(r_1+r_2-2))$ where $r = \max\{r_1,r_2\}$.
\end{restatable}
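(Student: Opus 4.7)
The plan is to bound $d_R(T_1, T_2)$ by routing a path from $T_1$ to $T_2$ through two intermediate skew trees, using the $n^2$ skew rotation distance bound implicit in Theorem~\ref{thmalgo:skewtrees} on the middle portion and separately bounding the cost of ``deskewing'' each input tree.

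The key intermediate lemma I would aim to establish is: for any full binary tree $T$ of rank $r$ on $n$ internal nodes, one can reach some skew tree $S$ via a rank-bounded sequence of rotations of length at most $(r-1)(2n+1)n^2$, where every intermediate tree has rank at most $r$. This is proved by induction on $r$; the base $r=1$ is trivial since $T$ is already skew. The inductive step rests on a ``rank reduction'' subroutine: a rank-bounded sequence of rotations that transforms a rank $r$ tree into one of rank at most $r-1$. A rank $r$ tree is characterized by containing a complete binary tree of height $r$ as a minor, equivalently by possessing a rank-$r$ witness internal node whose two children each have rank $r-1$. The reduction strategy would be to process such witnesses bottom-up and rearrange the surrounding structure, invoking Theorem~\ref{thmalgo:skewtrees} on subtrees: each local rearrangement reduces to a skew instance of size at most $n$ costing at most $n^2$ rotations, and at most $2n+1$ such local steps should suffice per unit rank drop, matching the claimed $(2n+1)n^2$ cost.

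Granted the lemma, the theorem follows by concatenation: pick skew trees $S_1, S_2$ reachable from $T_1, T_2$ respectively, each within $(r_i - 1)(2n+1)n^2$ rank-bounded rotations while staying within rank $r_i \le r$. Since skew trees are precisely the rank-$1$ trees, a shortest skew-to-skew path between $S_1$ and $S_2$, of length at most $n^2$ by Theorem~\ref{thmalgo:skewtrees}, consists entirely of rank-$1$ trees and so trivially stays within rank $r$. Concatenating the $T_1 \to S_1$ path, the $S_1 \to S_2$ path, and the reverse of the $T_2 \to S_2$ path gives
\begin{equation*}
d_R(T_1, T_2) \;\le\; (r_1-1)(2n+1)n^2 + n^2 + (r_2-1)(2n+1)n^2 \;=\; n^2\bigl(1 + (2n+1)(r_1+r_2-2)\bigr).
\end{equation*}

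The principal obstacle is the rank-reduction subroutine: the rotations must be chosen so that no intermediate tree ever exceeds rank $r$, even though a generic rotation can momentarily raise the rank. I expect the right approach is to operate strictly inside a deepest rank-$r$ witness subtree at a time, so that any rank fluctuations are confined locally to a subtree that already has rank $r$, and the global rank therefore cannot exceed $r$. Verifying the stated constants $(2n+1)$ and $n^2$ per rank drop will require showing that, after suitable preprocessing inside a single witness subtree, the restructuring reduces to at most $O(n)$ skew-distance instances, each handled within $n^2$ rotations by Theorem~\ref{thmalgo:skewtrees}.
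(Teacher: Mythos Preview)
Your overall architecture is exactly the paper's: reduce each $T_i$ to a skew tree $S_i$ by a rank-bounded sequence of length at most $(r_i-1)(2n+1)n^2$, connect $S_1$ to $S_2$ by a skew path of length at most $n^2$, and concatenate. The final arithmetic is identical.

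Where you diverge from the paper is in the rank-reduction subroutine, and your sketch there would not work as stated. You propose to operate at deepest rank-$r$ \emph{witnesses} (nodes whose two children each have rank $r-1$) and to invoke the skew-distance bound on the subtrees hanging there. But those subtrees have rank $r-1$, not rank $1$, so Theorem~\ref{thmalgo:skewtrees} does not apply to them; reducing one such child's rank is itself a rank-reduction problem of the same shape, and you are back where you started.

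The paper's key idea is to attack the \emph{opposite} end of the rank spectrum: process the rank-$2$ nodes that are farthest from the root. At such a node $v$, both children already have rank $1$, i.e.\ are skew, so the skew-distance bound \emph{does} apply: convert the left subtree to a left comb and the right subtree to a right comb (cost at most $2n^2$), then do at most $n$ right rotations at $v$ to turn the whole subtree into a right comb. This never raises the rank of $v$ above $2$, so the global rank stays at most $r$. Once every such rank-$2$ subtree has been flattened to rank $1$, a simple propagation argument shows that every node that had rank $i$ now has rank $i-1$, so the whole tree drops from rank $r$ to rank $r-1$. There are at most $n$ rank-$2$ nodes, each costing at most $2n^2+n$ rotations, giving $n(2n^2+n)=n^2(2n+1)$ per unit rank drop --- the constant you were trying to justify.
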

We define the height bounded rotation distance, $d_H(T_1,T_2)$, to be the length of the shortest height bounded path between the two trees $T_1$ and $T_2$ of height at most $h$. The $\HROTDIST$ problem is defined as:
\ourproblem{$\HROTDIST$: Given $h$, $k$ and two full binary trees $T_1$ and $T_2$ of height at most $h$, if the height bounded rotation distance between them is at most $k$ or not. That is, is $d_H(T_1,T_2) \le k$? }

We also show that the general problem of computing rotation distance reduces to this problem as well.

\begin{restatable}{theorem}{heightboundedrotationdistancereductiontheorem}\label{thm:heightoundedrotationdistancereductiontheorem} {\sc RotDist} many-one reduces to $\HROTDIST$ in polynomial time.
\end{restatable}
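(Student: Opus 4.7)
The plan is to exploit the fact that the height bound $h$ appears as an explicit parameter in the input of $\HROTDIST$, so the reduction is free to choose it subject only to the validity constraint $\max\{\text{height}(T_1),\text{height}(T_2)\} \le h$. My strategy will be to set $h$ to the universal upper bound on the height of any full binary tree with $n$ internal nodes, so that the height restriction on intermediate trees becomes vacuous on the constructed instance.

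Concretely, given a $\ROTDIST$ instance $(T_1, T_2, k)$ in which each $T_i$ has $n$ internal nodes, I would output the $\HROTDIST$ instance $(h, k, T_1, T_2)$ with $h := n$. Two elementary facts then drive correctness. First, any full binary tree with $n$ internal nodes has height at most $n$---the extremal case being a skew tree whose internal nodes all lie on a single root-to-leaf path---so $T_1$ and $T_2$ are valid inputs to $\HROTDIST$ under the chosen $h$. Second, rotations preserve the number of internal nodes, so every intermediate tree along an arbitrary rotation path from $T_1$ to $T_2$ also has $n$ internal nodes and hence height at most $n = h$. The height-bounded restriction is therefore vacuous, every rotation path is already a height-bounded path, and we obtain $d_H(T_1,T_2) = d(T_1,T_2)$, which yields the required equivalence $d(T_1,T_2) \le k \iff d_H(T_1,T_2) \le k$.

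The transformation is just a linear-time rewriting of the input that attaches the extra parameter $h = n$, so it is certainly polynomial time and establishes a many-one reduction. There is no real obstacle here; the only step that needs care is the elementary height bound $\text{height}(T) \le n$ for every full binary tree with $n$ internal nodes, which is immediate since each internal node contributes at most one edge to any root-to-leaf path. The only minor judgement call is to read the problem statement ``Given $h$, $k$ and two full binary trees $T_1$ and $T_2$ of height at most $h$'' as treating $h$ as an explicit input rather than implicitly as $\max\{\text{height}(T_1),\text{height}(T_2)\}$; under this natural reading, which matches the wording of the problem block, the reduction goes through immediately.
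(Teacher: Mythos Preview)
Your argument is correct under the literal reading of the $\HROTDIST$ problem block, in which $h$ is an explicit input that the reduction may choose freely subject to $\height(T_i)\le h$. With $h=n$ the height constraint is vacuous on every tree with $n$ internal nodes, so $d_H(T_1,T_2)=d(T_1,T_2)$ and the reduction is the identity plus a parameter.

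The paper takes a different and more elaborate route, paralleling its rank-bounded reduction. It builds new trees $T_1',T_2'$ by attaching each $T_i$ as the left subtree of the root of a right comb of height $n+1$; both constructed trees then have height exactly $n+1$. Correctness uses the Sleator--Tarjan--Thurston common-diagonal lemma (Lemma~\ref{lem:slaetorlemma}): the comb nodes are common to $T_1'$ and $T_2'$, so some shortest path never rotates at them, and every rotation along that path acts only inside the attached copy of $T_i$, keeping the overall height at exactly $n+1$ throughout.

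The distinction matters precisely at the ``judgement call'' you flag. If one reads $d_H$ as implicitly parameterized by $h=\max\{\height(T_1),\height(T_2)\}$, in analogy with $d_R$ where $r=\max\{\rank(T_1),\rank(T_2)\}$ is not part of the input, then your reduction collapses: two trees of small height need not admit a shortest rotation path whose intermediate trees stay within that height. The paper's gadget forces both endpoints to the maximal height $n+1$ and then uses structural control of the shortest path, so it works under either reading. Your approach buys simplicity and avoids the Sleator--Tarjan--Thurston machinery, but only under the permissive interpretation of the problem statement.
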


\noindent{\bf Our Techniques:}
We now discuss the techniques we use to prove Theorem~\ref{thmalgo:skewtrees} and Theorem~\ref{thm:rankdistbound} and related combinatorial bounds and characterizations. We believe that these might be of independent interest in other contexts too.

\medskip
We present a novel tool to analyse rotation distance problem by associating a permutation with every full binary tree as follows: label the internal nodes of the tree in such a way that the in-order traversal corresponds to the identity permutation of the integers $1,\ldots,n$. We call this labeling  the {\em in-order labeling}. The associated permutation to the full binary tree is the permutation produced by the pre-order traversal of the tree whose internal nodes are labeled with the in-order labeling. Since a binary tree is uniquely determined by the in-order and pre-order traversals, this is well-defined. We answer several combinatorial questions related to this new connection:\vspace*{-2mm}

\paragraph{\bf Characterizing Tree-permutations:}
Not all permutations in $S_n$ correspond to a tree, so we refer to those permutations which do correspond to a tree as {\em tree-permutations}. We now give an exact characterization of which permutations in $S_n$ are tree-permutations by using a pattern avoidance property.\vspace*{-2mm}
\begin{restatable}{theorem}{treepermutationcharacterization}\label{thm:treepermutation-charactersization}
A permutation $\sigma \in S_n$ is a tree permutation if and only if there does not exist indices $i<j<k$ such that $\sigma(k) < \sigma(i) < \sigma(j)$.
\end{restatable}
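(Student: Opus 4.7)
The characterization asserts exactly that tree-permutations coincide with the classical $231$-avoiding permutations, and I would prove both directions by induction on $n$ using the recursive structure of in-order labelling and pre-order traversal.

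The foundational observation I would establish first is a recursive decomposition of $\sigma$. If $T$ has root $\rho$, left subtree $T_L$ and right subtree $T_R$ with $|T_L|=r-1$ internal nodes, then in-order labelling forces $\rho$ to receive label $r$, the nodes of $T_L$ to receive labels $\{1,\dots,r-1\}$ in in-order, and those of $T_R$ to receive $\{r+1,\dots,n\}$. Pre-order traversal of $T$ therefore writes $\sigma=(r,\alpha,\beta)$, where $\alpha$ is the length $r-1$ pre-order sequence of $T_L$ (a permutation of $\{1,\dots,r-1\}$) and $\beta$ is the length $n-r$ pre-order sequence of $T_R$ (a permutation of $\{r+1,\dots,n\}$). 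Crucially, the left block of $\sigma$ precedes the right block, and every value in the left block is strictly smaller than every value in the right block.

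For the forward direction, suppose that a tree-permutation $\sigma$ admitted indices $i<j<k$ with $\sigma(k)<\sigma(i)<\sigma(j)$. If $i=1$ then $\sigma(i)=r$, forcing $\sigma(j)>r$ and $\sigma(k)<r$; but this places $j$ in the right block and $k$ in the left block, contradicting $j<k$. If $i\ge 2$ and all three indices lie in a common block, the inductive hypothesis applied to the smaller subtree (after shifting labels down by $r$ in the right-block case) rules out such a triple. The only remaining possibilities are $i$ in the left block with $j,k$ in the right block, or $i,j$ in the left block with $k$ in the right block; in both cases $\sigma(k)>r>\sigma(i)$, contradicting $\sigma(k)<\sigma(i)$.

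For the converse, given a $231$-avoiding $\sigma\in S_n$ I would set $r:=\sigma(1)$ and first prove that $\sigma(2),\dots,\sigma(n)$ splits into an initial segment of values in $\{1,\dots,r-1\}$ followed by a final segment of values in $\{r+1,\dots,n\}$: any interleaving would supply $j<k$ with $\sigma(j)>r>\sigma(k)$, and together with $i=1$ this is the forbidden $231$ pattern. The two sub-permutations $\alpha$ on $\{1,\dots,r-1\}$ and $\beta$ on $\{r+1,\dots,n\}$ (the latter relabelled to $S_{n-r}$ by subtracting $r$) inherit $231$-avoidance, so the inductive hypothesis produces trees $T_L,T_R$; attaching them below a new root yields $T$, and checking that the in-order labelling of $T$ is $1,\dots,n$ and its pre-order is $\sigma$ is immediate from the construction. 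The main bookkeeping subtlety is the index/value shift when invoking induction on the right block, and ensuring the base cases $n\in\{0,1\}$ are handled (both trivially), but once that relabelling is spelled out the argument is a clean case analysis.
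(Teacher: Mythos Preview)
Your proposal is correct and follows essentially the same approach as the paper: both directions rest on the recursive decomposition $\sigma=(r,\alpha,\beta)$ with $r=\sigma(1)$, the left block carrying values below $r$ and the right block values above $r$, and the reverse direction in particular is identical to the paper's construction. The only minor difference is that for the forward direction the paper argues directly (for each node $\sigma(i)$, all smaller values occurring after position $i$ lie in its left subtree and hence precede all larger values occurring after position $i$), whereas you package the same observation as an induction with a case split on which block contains $i,j,k$; both arguments are equivalent in content.
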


A full binary tree in which all caret nodes have at most one caret node on its left or right child is called a skew tree. We prove a characterization of tree permutations corresponding to skew trees (which we call  {\em skew permutations}) as follows:

\begin{restatable}{lemma}{introlemmaminmaxskew}
\label{minmaxSkew}
A permutation $\sigma \in S_n$ is a skew permutation if and only if $\forall i \in [n]$,
$$\sigma(i) = \min\{\sigma(i),\sigma(i+1),\ldots ,\sigma(n)\} \textrm{ ~~~or~~~ } \sigma(i) = \max\{\sigma(i),\sigma(i+1),\ldots ,\sigma(n)\}$$ where $\min$ (resp. $\max$) represent minimum (resp. maximum) from the given set.
\end{restatable}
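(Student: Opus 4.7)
The plan is to induct on $n$, with the base case $n = 1$ immediate. The structural crux is a simple observation about skew trees: since every internal node (including the root) has at least one leaf child, all of the remaining $n-1$ internal nodes must lie entirely in a single subtree of the root. Under the in-order labeling, this forces the root's label to be either $1$ (when the left subtree is a lone leaf, so all other internal nodes sit in the right subtree) or $n$ (when the right subtree is a lone leaf). Because pre-order lists the root first, this translates to $\sigma(1) \in \{1,n\}$, which is precisely the statement that $\sigma(1)$ is the min or the max of $\{\sigma(1), \ldots, \sigma(n)\} = \{1, \ldots, n\}$.

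For the forward direction, suppose $T$ is a skew tree with associated permutation $\sigma$. The observation above handles $i = 1$. Deleting the root of $T$ and the accompanying leaf yields a skew tree $T'$ on $n-1$ internal nodes whose labels form either $\{2, \ldots, n\}$ (root case $\sigma(1)=1$) or $\{1, \ldots, n-1\}$ (root case $\sigma(1)=n\}$), and whose pre-order is $\sigma(2), \ldots, \sigma(n)$. Applying the unique order-preserving bijection from these labels to $[n-1]$ turns $T'$ into a skew tree on $n-1$ internal nodes whose in-order labeling is $1, \ldots, n-1$, so the induction hypothesis produces the min/max condition at every position in its pre-order. Since the min/max condition for a suffix is invariant under any order-preserving relabeling of its entries, it transfers back to $\sigma(2), \ldots, \sigma(n)$, giving the condition for all $i \ge 2$.

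For the backward direction, assume $\sigma$ satisfies the hypothesis. Then $\sigma(1) \in \{1,n\}$ by applying the condition at $i=1$. Construct $T$ by designating $\sigma(1)$ as the root with a leaf attached as its left child if $\sigma(1) = 1$ and as its right child if $\sigma(1) = n$, and recursively build the other subtree from $\sigma(2), \ldots, \sigma(n)$. That tail, viewed on its label set $\{2, \ldots, n\}$ or $\{1, \ldots, n-1\}$, inherits the min/max condition at every position, so after the order-preserving shift to $[n-1]$ the induction hypothesis produces a skew tree whose in-order labeling and pre-order match. Reassembling, $T$ is a skew tree with in-order labeling $1, \ldots, n$ and pre-order $\sigma$, hence $\sigma$ is a skew permutation.

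The only real subtlety, and what I expect to be the main bookkeeping hurdle, is making the relabeling step rigorous: the subsequence $\sigma(2), \ldots, \sigma(n)$ lives on a shifted label set rather than on $[n-1]$, so one must formally check that replacing each entry by its rank within the suffix preserves, position by position, the property of being the min or max of the suffix. This is conceptually transparent because order-preserving bijections commute with $\min$ and $\max$, but it needs to be stated explicitly to keep the induction from blurring absolute values with relative ranks.
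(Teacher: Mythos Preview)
Your proof is correct and follows essentially the same inductive approach as the paper: both observe that in a skew tree the root is labeled $1$ or $n$ (giving the min/max condition at $i=1$) and then recurse into the single nontrivial subtree. The paper inducts on height (which for skew trees equals $n$) and treats only the forward direction explicitly, whereas you also supply the converse and make the relabeling step precise; but the underlying argument is identical.
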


Now the main idea of the algorithm for Theorem~\ref{thmalgo:skewtrees} is to translate the rotation problem for skew trees to an operation on binary strings which we prove in Lemma~\ref{minmaxswap} using the above characterization. En route to this proof, we also show that the skew rotation distance between two skew trees of $n$ internal nodes can be at most $n^2$ distance. We also analyse the distance of any given tree $T$ to the nearest skew tree and prove matching upper and lower bounds.\vspace*{-2mm}

\paragraph{\bf Characterizing Rotation using Transpositions:}
Given two trees $T_1$ and $T_2$, we can interpret them as permutations $\sigma, \tau \in S_n$ and study the effect of rotation on these permutations. We prove that rotation must necessarily be a special kind of transposition operation and hence we call it a \textit{tree transposition}.

A transposition is an operation on a permutation (in general on any sequence of numbers) that consists in swapping two consecutive sequences of the permutation. We define this formally now (cf. \cite{BFR11}). Let $\sigma$ be a permutation. Given three integers $i,j,k$ such that $1\leq i<j<k \leq n$, the result of \emph{transposition} operation $\delta_{i,j,k}$ applied to the permutation $\sigma$ is the following permutation (let $q(j)=k+i-j$). For all $t \in [n]$,
% $$
% \delta_{i,j,k}(t) =
% \begin{cases}
% \sigma(t) & 1\leq t< i \\
% \sigma(t+j-i) & i\leq t < q(j) \\
% \sigma(t+j-k) & q(j)\leq t <k \\
% \sigma(t) & k\leq t \leq n
% \end{cases}
% $$
For any $1\leq t< i$, $\delta_{i,j,k}(t)=\sigma(t)$. For any $i\leq t <q(j)$, $\delta_{i,j,k}(t)=\sigma(t+j-i)$. For any $q(j)\leq t <k$, $\delta_{i,j,k}(t)=\sigma(t+j-k)$.
For any $k\leq t \leq n$, $\delta_{i,j,k}(t)=\sigma(t)$. The \textit{transposition distance} between two permutations is the minimum number of transposition operations that are needed to transform one permutation to another.
A transposition $\delta_{i,j,k}$ is a \textit{$1$-transposition} if $j=i+1$ or $k=j+1$.

Using the above definition, a transposition $\delta$ (we drop the subscripts $i,j,k$) is said to be a \textit{tree transposition} if there exists a pair of trees $T_1$ and $T_2$ with corresponding permutations $\sigma$ and $\tau$, and $d(T_1, T_2) = 1$ such that the transposition operation $\delta$ applied to $\sigma$ gives $\tau$. We prove the following characterization.\vspace*{-2mm}

\begin{theorem}
\label{thm:trasposition-treeperm}
A transposition $\delta_{i,j,k}$ is a tree transposition if and only if it is a $1$-transposition.
\end{theorem}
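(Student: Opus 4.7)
The plan is to analyze the effect of a rotation on the pre-order traversal and match it against the structural form of a $1$-transposition; the converse is then obtained by an explicit construction.

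For the forward direction, fix a right rotation at an internal node $v$ of $T_1$, with $u$ the left child of $v$, $A, B$ the left and right subtrees of $u$, and $C$ the right subtree of $v$. The in-order traversal $A, u, B, v, C$ is preserved by the rotation, so the in-order labels of $T_1$ and $T_2$ coincide; it therefore suffices to compare pre-orders. Restricted to the subtree rooted at $v$, the pre-order reads $v, u, A, B, C$ in $T_1$ and $u, A, v, B, C$ in $T_2$, while positions outside this subtree are unchanged. If $v$ sits at pre-order position $p$ and $a = |A|$, this change is exactly the swap of the length-$1$ block $[v]$ with the immediately following length-$(a+1)$ block $[u, A]$, namely $\delta_{p,\, p+1,\, p+a+2}$, which is a $1$-transposition since $j = i+1$. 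A left rotation is handled symmetrically: the pre-order change $u, A, v \mapsto v, u, A$ yields a $1$-transposition with $k = j+1$.

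For the converse, given $\delta_{i, i+1, k}$ on $[n]$ with $k \le n$, I would construct an explicit full binary tree $T$ of $n$ internal nodes whose right rotation at a chosen node realizes this transposition. Set $a := k - i - 2 \ge 0$ and build $T$ so that: each of the $i - 1$ nodes at pre-order positions $1, \ldots, i-1$ has a leaf as left child and the next internal node as right child (so they form a right spine); the right child of the last such node (or the root of the whole tree, if $i = 1$) is a node $v$ at position $i$; $v$'s left child $u$ at position $i+1$ has any full binary tree on $a$ internal nodes as its left subtree and a leaf as its right child; and $v$'s right subtree is any full binary tree on the remaining $n - k + 1 \ge 1$ internal nodes. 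The first direction's computation then shows that a right rotation at $v$ transforms the pre-order of $T$ into precisely $\delta_{i,i+1,k}$ applied to it. The case $\delta_{i, j, j+1}$ is handled symmetrically using a left rotation.

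The main technical point is the position bookkeeping---tracing how subtree sizes translate into the correct block boundaries of $\delta_{i,j,k}$---along with handling boundary cases such as $a = 0$, $i = 1$, or $n = k$. These resolve cleanly since the constraint $1 \le i < j < k \le n$ forces $a \ge 0$ and leaves at least one internal node available for $v$'s right subtree, so the construction always yields a valid full binary tree of size exactly $n$.
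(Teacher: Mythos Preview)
Your forward direction coincides with the paper's Lemma~\ref{lem-tree-to-one}: both trace how a rotation at a node $v$ shifts a single element past the contiguous block corresponding to one subtree in the pre-order, yielding a $1$-transposition. Your bookkeeping is slightly more explicit (you name the block boundaries $p,\,p+1,\,p+a+2$), but the idea is identical.

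For the converse you take a genuinely different route. The paper does \emph{not} construct a witnessing tree for each $1$-transposition; instead it argues by counting, showing separately that the number of tree transpositions and the number of $1$-transpositions on $n$ elements are both $(n-1)^2$ (Lemmas~\ref{ABOSPcount} and~\ref{lem:1-transposition-count}), and concludes equality of the two sets from the inclusion already established. Your explicit construction---a right spine of length $i-1$ feeding into a node $v$ whose left child $u$ carries a left subtree of exactly $a=k-i-2$ internal nodes---realises each $\delta_{i,i+1,k}$ directly via a right rotation at $v$, and the symmetric construction handles $\delta_{i,j,j+1}$ via a left rotation. This is more elementary in that it avoids the double enumeration and its attendant overcount corrections, and it is also more informative: it exhibits a concrete pair $(T_1,T_2)$ for every $1$-transposition rather than merely asserting existence. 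The paper's counting approach, on the other hand, gives the cardinality $(n-1)^2$ as a by-product, which your argument does not immediately yield. Both approaches are valid; yours is arguably cleaner for the bare equivalence, while the paper's yields an additional enumerative fact.
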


While the characterization does not lead to a new algorithm for computating rotation distance between two trees, we believe that it might be useful in proving $\NP$-compeleteness of the problem, given that the problem of transforming a given permutation to identity permutation (this is called the sorting problem for permutations) with minimum number of transposition operations is known to be $\NP$-complete~\cite{BFR11}.\vspace*{-2mm}

\paragraph{Tree Polynomials and Rotation:} Wiley and Grey~\cite{WG14} associated a bivariate polynomial to every  binary tree (see Section~\ref{sec:tree-polynomials} for a precise definition and a statement of the characterization). A natural approach to the rotation distance problem is to interpret them using the bivariate polynomials corresponding to the trees and interpreting the effect of rotation on the polynomials. However, we show that there are two distinct full binary trees which will result in the same polynomial. On the positive side, when restricted to rank $1$ trees (that is, skew trees), the association is an injection, and we derive a characterization for these polynomials (which we call the {\em skew polynomials}). We also explore the effect of rotation in this representation and observe that the algorithm for finding the skew rotation distance has a natural counterpart in terms of polynomials as well (which we describe for completeness).

\paragraph{Organization of the Paper:}
The rest of the paper is organized as follows. We discuss the connection between permutations, transpositions, and tree rotation in Section~\ref{sec:tree-permutations} and provide the characterizations mentioned above. We present our results about rank bounded rotation distance in Sections~\ref{sec:rank-1} and \ref{sec:rotation-rankbounded-reduction}. Results about height bounded rotation distance is discussed in the Section~\ref{sec:rotation-heightbounded-reduction}. Finally, we discuss the characterization of tree-polynomials and skew versions of the same in Section~\ref{sec:tree-polynomials}. We conclude with discussion and open problems in Section~\ref{sec:openproblems}.

\section{Preliminaries}
\label{sec:preliminaries}

%========END OF THE INITIALLY SUBMITTED PAPER -Prelims======
%Below is the one that is suggested by the Reviewer
The nodes of a binary tree are classified as two types: leaf and internal. A leaf node is one which does not have a child, and an internal node is one which does. An (internal) node is called a caret node if it has two children (i.e. both a left child and a right child). A binary tree is called a full binary tree if every node is either a caret node or a leaf (i.e. if every internal node has two children). A binary tree is called a skew tree if every caret node has at most one child which is also a caret node (a full binary tree is a skew tree if and only if every internal node has at least one child which is a leaf). Every full binary tree has an odd number of nodes ($2n+1$ for some integer $n$) and $n$ of them are internal nodes and $n+1$ of them are leaves. A binary tree on $n$ nodes can be made into a full binary tree on $2n+1$ nodes by attaching a leaf to every node which has only one child and attaching two leaves to every node with no children so that each node in the original tree becomes an internal node in the new tree; this perspective is useful for this paper as we always consider the number of internal nodes of a full binary tree.

The height of a binary tree is the number of internal nodes in a longest possible path from the root to some leaf. Alternatively, the height of a binary tree with one internal node is $1$, and inductively, the height of a binary tree on more than one node is the one more than the maximum of the height of the left subtree and the height of the right subtree.

A labeled binary tree is a binary tree where every node contains a label. A tree traversal is an algorithm for "visiting" each node of a binary tree, and the traversal of a labeled binary tree is an ordered list of the labels of each node where each node appears exactly once in the list. Three common traversals are (1) pre-order traversal, (2) in-order traversal, and (3) post-order traversal each of which is defined inductively. For a labeled binary tree with one node, all three traversals are the single element list with the label.

Inductively, the pre-order traversal of a labeled binary tree is the concatenation of three lists: (a) the singleton list with the label of the root, (b) the pre-order traversal of the left subtree, and (c) the pre-order traversal of the right subtree.

Inductively, the in-order traversal of a labeled binary tree is the concatenation of three lists: (a) the in-order traversal of the left subtree, (b) the singleton list with the label of the root, and (c) the in-order traversal of the right subtree.

Inductively, the post-order traversal of a labeled binary tree is the concatenation of three lists: (a) the post-order traversal of the left subtree, (b) the post-order traversal of the right subtree, and (c) the singleton list with the label of the root.

A rotation of a (labeled) binary tree at a node is an operation which produces a new (labeled) binary tree such that the in-order traversal of the original tree is the same as the in-order traversal of the new tree. A right (resp. left) rotation can be performed at a node "a" if the node has a left (resp. right) child "b"; let "C" be the (possibly empty) left (resp. right) subtree of "b", and "D" be the (possibly empty) right (resp. left) subtree of "b", and let "E" be the (possibly empty) right (resp. left) subtree of "a" (see Figure \ref{fig:TreeRotationPrelims}).

\begin{figure}[h]
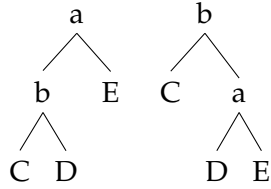

\vspace*{-3mm}
    \centering \hspace*{-5cm}
    \Tree[.a [.b [.C ] [.D ] ] [.E ] ] \hspace*{-3cm}  \Tree[.b [.C ] [.a [.D ] [.E ] ] ]
	  \caption{Right rotation at node "a"}
    \label{fig:TreeRotationPrelims}
\end{figure}

The new tree is identical except that the subtree at "a" is replaced with the following tree: the root is "b", the left (resp. right) child of "b" is "C", the right (resp. left) child of "b" is "a", the left (resp. right) child of "a" is "D", and the right (resp. left) subtree of "a" is "E".

The rotation distance between two full binary trees $T_1$ and $T_2$ is the minimum number of rotations required to transform $T_1$ into $T_2$ (or equivalently transform $T_2$ into $T_1$ since left and right rotations are inverse operations when applied at the appropriate nodes).

\medskip
\noindent{\bf Rank of Full Binary Trees:}
The rank of a binary tree was originally defined and studied in the context of decision trees~\cite{EH89} and hence is applicable to full binary trees. It is inductively defined as follows:
rank$(u) = 0$ if $u$ is a leaf. Otherwise $u$ is not a leaf, so it is internal and has two children $v$ and $w$, and the rank is defined as follows: rank$(u)$ = rank$(v)+1$ if rank$(v)$ = rank$(w)$ otherwise it is $\max\{$rank$(v)$,rank$(w)\}$.
The rank of the tree is the rank of the root node in the tree. It can also be seen that rank is exactly the maximum height of the complete binary tree that can be obtained from the given tree by a sequence of edge contractions (and hence a minor of the given tree, since deletion of vertices will disconnect the tree).

Using any of the above characterizations, we now prove that rank $1$ full binary trees exactly coincide with full binary skew trees.
\begin{proposition}
\label{prop:skewtree-rankbound}
For any full binary tree $T$, $rank(T) = 1 \iff T \textrm{ is a full skew tree}$.
\end{proposition}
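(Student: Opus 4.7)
The plan is to prove both directions by structural induction on the number of internal nodes in $T$, using the inductive definition of rank directly. The base case is a full binary tree with a single internal node, which has two leaf children; such a tree is skew by definition, and its rank is $0 + 1 = 1$, so both sides of the biconditional hold.

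For the forward direction, assume $\mathrm{rank}(T) = 1$ and let $v, w$ be the two children of the root. I would first rule out the case that both $v$ and $w$ have positive rank: if $\mathrm{rank}(v), \mathrm{rank}(w) \ge 1$, then either they are equal, forcing $\mathrm{rank}(T) \ge 2$, or they are unequal, also forcing $\mathrm{rank}(T) \ge 2$; both contradict $\mathrm{rank}(T) = 1$. Hence at least one child, say $v$, has rank $0$, and by the definition of rank this means $v$ is a leaf. Thus the root satisfies the skew property. The other child $w$ is the root of a subtree $T_w$ which, from the rank recurrence, must have rank exactly $1$ (or $0$, in which case $w$ is also a leaf and $T$ is the base case). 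By the inductive hypothesis $T_w$ is a skew tree, so every caret node of $T$ has a leaf child, and $T$ is skew.

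For the reverse direction, assume $T$ is a full skew tree and consider its root $u$ with children $v, w$. By the skew property, at least one of them is a leaf, say $v$, so $\mathrm{rank}(v) = 0$. If $w$ is also a leaf, then $\mathrm{rank}(u) = 0 + 1 = 1$. Otherwise $w$ is an internal node whose subtree $T_w$ is itself a full skew tree (since the skew property is inherited by subtrees), and by the inductive hypothesis $\mathrm{rank}(w) = 1$. Since $\mathrm{rank}(v) = 0 \neq 1 = \mathrm{rank}(w)$, the rank recurrence gives $\mathrm{rank}(u) = \max\{0,1\} = 1$, completing the induction.

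There is no real obstacle here: the whole proof is an exercise in matching the recursive definitions of rank and of skewness. The one subtlety to be careful about is the case analysis at the inductive step of the forward direction, where I must exclude the possibility that both children have rank $\ge 1$; this is what forces at least one child to be a leaf and drives the argument. A brief remark that a degenerate "tree" consisting of a single leaf is excluded (or considered separately with $\mathrm{rank}=0$) suffices to handle the boundary of the induction cleanly.
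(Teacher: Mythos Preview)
Your proposal is correct and follows essentially the same approach as the paper: a straightforward structural induction that matches the recursive definitions of rank and skewness, with the key observation that rank $1$ at the root forces one child to be a leaf. The only cosmetic difference is that you induct on the number of internal nodes while the paper inducts on height, which makes no material difference here.
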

\begin{proof}
We prove the forward direction. We can prove by induction on the height of the tree. For the base case, when the height is $1$. Let $rank(T)=1$. By definition, both children have rank $0$ and it is a skew tree. For the induction step, assume the statement for trees of height $h+1$. Since height is more than 1, it must be the case that one of the children has rank $0$ and the other has rank $1$. Thus the root has one child as a leaf (without loss of generality the left) and the other child is the root of a height $h$ tree of rank 1. Hence applying induction hypothesis implies that the right subtree of the root must be a skew tree which in turn implies that the tree $T$ by itself also must be a skew tree.

\medskip
The reverse direction follows directly by induction on the height of the tree. Base case can be checked easily. Let $T$ be a skew tree of height $h+1$. Without loss of generality assume that the left child of the root is a leaf. The right subtree of the root must be of height $h$, and hence by induction hypothesis must have rank at most 1. The rank of the root node of $T$ then must be $1$ by definition.
\end{proof}

\noindent {\bf Binary Trees and Triangulations:} There is a bijection between full binary trees with $n$ internal nodes and polygon triangulations of a polygon with $n+2$ sides which we outline below (see \cite{STT86} for more details). One of the edges in the polygon is marked as root. The other $n+1$ edges are labeled in counter clockwise direction. These corresponds to $n+1$ leaves of the full binary tree. The sides of the polygon are called edges and the chords that make triangles are called diagonals. The $n-1$ diagonals corresponds to $n-1$ internal nodes excluding the root. Sleator, Tarjan and Thurston  \cite{STT86} proved that rotation in the binary trees exactly correspond to an appropriate flipping of diagonal in the polygon representation, thus formulating the equivalent problem in terms of flipping distance between two triangulations of the polygons.

We now associate the common diagonals in the polygon representation to the corresponding nodes in the tree. For a diagonal $u$ in a polygon representation, let $L(u)$ and $R(u)$ represent the set of edges of the polygon on either side of the diagonal (with $L(u)$ containing the edge corresponding to the root of the tree). In the polygon representations of two trees $T$ and $T'$, a diagonal $u$ of the first polygon is said to be common with the a diagonal $v$ of the second polygon if $L(u) = L(v)$ and $R(u) = R(v)$. This can be equivalently interpreted in the trees also. Consider two full binary trees $T_1$ and $T_2$. A vertex $u$ in $T_1$ is said to be {\em common} with vertex $v$ in $T_2$, if in the in-order labeling of the leaves of the tree, the set of labels that appear in the subtrees rooted at $u$ and $v$ (in $T_1$ and $T_2$ respectively) are exactly the same.

The following lemma from Sleator, Tarjan and Thurston ~\cite{STT86} states  a property of shortest diagonal flip path between two polygon triangulations (equivalently, the shortest rotation sequence between the corresponding binary trees).
\begin{lemma}[\cite{STT86}]
\label{lem:slaetorlemma}
If $\tau_1$ and $\tau_2$ are two triangulations of the polygon (having $n+2$ vertices), having a diagonal in common, then a shortest (diagonal flipping) path from $\tau_1$ to $\tau_2$ never flips this diagonal. In fact, any path that flips this diagonal is at least two flips longer than a shortest path.
\end{lemma}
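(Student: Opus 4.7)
The plan is to exploit the geometric fact that the common diagonal $d$ (shared by $\tau_1$ and $\tau_2$) splits the polygon $P$ with $n+2$ vertices into two sub-polygons $P_L$ and $P_R$, corresponding to $L(d)$ and $R(d)$ respectively, which share only the edge $d$. Since $d$ lies in both triangulations, each $\tau_i$ restricts to a triangulation $\tau_i^L$ of $P_L$ and $\tau_i^R$ of $P_R$, with $\tau_i = \tau_i^L \cup \{d\} \cup \tau_i^R$. Let $d_L$ (resp.\ $d_R$) denote the flip distance in $P_L$ (resp.\ $P_R$) between $\tau_1^L$ and $\tau_2^L$ (resp.\ $\tau_1^R$ and $\tau_2^R$). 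I aim to show that the flip distance between $\tau_1$ and $\tau_2$ equals $d_L + d_R$ and is realized only by paths avoiding $d$.

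First I would establish the easy upper bound: by concatenating an optimal flip sequence inside $P_L$ with one inside $P_R$, one obtains a flip sequence from $\tau_1$ to $\tau_2$ of length exactly $d_L + d_R$ that never flips $d$. This is legitimate because any flip of a diagonal $e \neq d$ is a strictly local operation on the two triangles adjacent to $e$, and when $d$ is present these two triangles lie entirely inside $P_L$ or entirely inside $P_R$.

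The matching lower bound is the heart of the lemma. Here my plan is an exchange argument on any path $\pi$ from $\tau_1$ to $\tau_2$ that flips $d$. Because $d \in \tau_1 \cap \tau_2$, the diagonal $d$ must be flipped an even number of times (at least twice) along $\pi$. I would identify the first step $i$ at which $d$ is removed and the first subsequent step $j$ at which $d$ is restored. Between these steps $d$ is absent, so each intermediate triangulation contains exactly one diagonal separating $L(d)$ from $R(d)$ (this is the diagonal created when $d$ was flipped, possibly modified by later flips, but only flips that exchange it with another diagonal crossing $d$ preserve this property). I would then show, by a local rewriting of the subsequence from step $i$ to step $j$, how to produce a sequence of flips entirely inside $P_L$ together with flips entirely inside $P_R$ that accomplishes the same transformation on the restrictions $\tau^L$ and $\tau^R$ while leaving $d$ in place throughout. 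The two flips of $d$ (at steps $i$ and $j$) disappear, and no flips are added on the sub-polygon sides, so the resulting path is at least $2$ steps shorter, still connects $\tau_1$ to $\tau_2$, and is still a valid flip sequence. Iterating this removes every flip of $d$ from $\pi$, giving a $d$-avoiding path of length at most $|\pi|-2k$ where $2k$ is the number of times $\pi$ flips $d$; combined with the first part this yields $|\pi| \geq d_L + d_R + 2$ whenever $\pi$ flips $d$, and $|\pi| \geq d_L + d_R$ in general.

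The main obstacle is the rewriting step between positions $i$ and $j$: during this interval the triangulation need not decompose cleanly across $d$, since a diagonal crosses the line of $d$, and successive flips may modify this crossing diagonal or interact with triangles on both sides. To control this, I would track the unique diagonal crossing $d$ (the \emph{link diagonal}) and show that every flip in the interval either acts on this link diagonal (replacing it by another link diagonal) or acts entirely in $P_L$ or entirely in $P_R$. One can then simulate each link-diagonal flip by a pair of flips inside $P_L$ and $P_R$ that reaches the same restrictions — and crucially, the two original $d$-flips at steps $i$ and $j$ together with the link-diagonal flip at step $i$ contribute at least two flips that are absent in the simulation, delivering the savings of two. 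This case analysis on the link diagonal's evolution is the technically delicate part of the argument.
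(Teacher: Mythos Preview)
The paper does not prove this lemma; it is quoted from Sleator--Tarjan--Thurston \cite{STT86} at the end of Section~\ref{sec:preliminaries} and used as a black box in the reductions of Sections~\ref{sec:rotation-rankbounded-reduction} and~\ref{sec:rotation-heightbounded-reduction}. So there is no proof in this paper to compare your attempt against.

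Evaluating your argument on its own merits: the upper bound $d_L+d_R$ and the parity observation are fine, but the rewriting step has a genuine gap. You claim that once $d$ has been flipped away there is at every intermediate step a \emph{unique} diagonal crossing $d$ (your ``link diagonal''), and that each flip in the interval either replaces this link diagonal by another or lies entirely inside $P_L$ or entirely inside $P_R$. This trichotomy is false. Take the convex hexagon on vertices $1,2,\ldots,6$ with $d=14$, so that $P_L$ has vertex set $\{1,2,3,4\}$ and $P_R$ has vertex set $\{1,4,5,6\}$. From the triangulation $\{13,14,46\}$, flip $d$ to obtain $\{13,36,46\}$, which has the unique crossing diagonal $36$. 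Now flip $13$: the two triangles meeting $13$ are $123$ and $136$, so the flip replaces $13$ by $26$, giving $\{26,36,46\}$. Here \emph{two} diagonals, $26$ and $36$, cross $d$; moreover the flip just performed was neither a flip of the link diagonal nor a flip whose quadrilateral sits inside one sub-polygon (the quadrilateral $1236$ straddles $d$). In general the diagonals of a triangulation that cross a fixed chord $d$ form a path of arbitrary length in the dual tree, and flips adjacent to that path can lengthen or shorten it freely. Your bookkeeping does not control this, so the local simulation you sketch (``simulate each link-diagonal flip by a pair of flips inside $P_L$ and $P_R$'') does not cover all cases, and the claimed saving of two flips is not established. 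A correct argument must either handle arbitrarily many crossing diagonals simultaneously or avoid the step-by-step rewriting approach altogether.
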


\section{Tree permutations and rotation}
\label{sec:tree-permutations}

As mentioned in the introduction, our main tool in the paper is a connection between trees and permutations and an interpretation of rotation in the corresponding permutation.  In this section, we develop these tools and then apply them in the upcoming sections.

%Tree permutations can be efficiently recognized - given the in-order and pre-order traversals the tree can be constructed in time linear in $n$.
%\paragraph{\bf Proof of Theorem~\ref{thm:treepermutation-charactersization}:}
%}

\subsection{Characterization of tree permutations}

Recall the encoding of trees as pre-order traversals of the tree labeled with in-order labeling. Note that, not all permutations in $S_n$ will correspond to full binary trees. For example, consider the permutation $(2\ 1\ 5\ 6\ 4\ 3\ 7)$. When the in-order traversal is fixed to be $(1\ 2\ 3\ 4\ 5\ 6\ 7)$, this does not correspond to a full binary tree and hence is not a tree permutation, as follows from the upcoming Theorem \ref{thm:treepermutation-charactersization}.  We first characterize the tree permutations, proving Theorem~\ref{thm:treepermutation-charactersization} from the introduction.

\treepermutationcharacterization*
\begin{proof}
We prove the forward direction first. Suppose $\sigma \in S_n$ is a tree permutation. Since the tree permutation is the pre-ordering of the nodes whose in-order traversal is the identity permutation, for any $\sigma(i)$, all the nodes which are on its left subtree appears before any nodes in its right subtree, because of the pre-order. For any node $\sigma(i)$, all the values on the left subtree are smaller than $\sigma(i)$, and all the values on the right subtree are larger than $\sigma(i)$, because of the in-order. This ensures that no value lesser than $\sigma(i)$ appears after a value greater than $\sigma(i)$. In other words, there are no indices $i<j<k$ such that $\sigma(k) < \sigma(i) < \sigma(j)$.

\medskip
Now to prove the reverse direction, let $\sigma = a_1a_2\ldots a_n$, be the permutation under consideration.
We know that it satisfies the above avoidance property that there are no indices $i,j,k$ such that $i<j<k$ and $\sigma(k)<\sigma(i)<\sigma(j)$. Let us fix $i=1$ the first index. We consider the two following cases.
\begin{description}
\item{\bf Case 1: $\exists j > i$ such that $\sigma(j)>\sigma(i)$}: \\
Let $q = \min\{j \mid j > i \textrm{ and } \sigma(j)>\sigma(i) \}$.
We generate the tree as follows. The root of the tree is $\sigma(i)=a_1$. Nodes from $i+1$ to $q-1$ which are less than $\sigma(i)$ form the left subtree. Nodes from $q$ to $n$ forms the right subtree. The avoidance property ensures that there are no $k > q$ such that $\sigma(k) < \sigma(i)$. So all the elements on the right subtree are greater than $\sigma(i)$. If $q = i+1$, then the left subtree does not contain any internal node and hence will contain only the leaf node.
\item{\bf Case 2: $\forall j > i$, $\sigma(j) <\sigma(i)$} \\
In this case, the root of the tree is $\sigma(i)=a_1$. Right subtree is empty. Nodes from $i+1$ to $n$ which are less than $\sigma(i)$ form the left subtree.
\end{description}
The same procedure can be applied recursively on the permutation corresponding to the left and right subtree and produce the unique tree. This implies that any pattern avoiding permutation avoiding the above discussed pattern is a tree permutation. This completes the proof.
\end{proof}

\subsection{Characterizing rotations using transpositions}
\label{sec:rotdist-trasp}
In this section, we prove our results about rotation and transpositions that we described in section~\ref{sec:intro}. More specifically, we prove Theorem~\ref{thm:trasposition-treeperm} which concludes that
a transposition $\delta_{i,j,k}$ is a tree transposition if and only if it is a $1$-transposition. The plan of the proof is as follows:
We first prove that every tree transposition is also a $1$-transposition, Then we use a counting argument and show that the number of $1$-transpositions is equal to the number of tree transpositions.

\begin{lemma}
\label{lem-tree-to-one}
Every tree transposition is a $1$-transposition.
\end{lemma}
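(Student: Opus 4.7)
The plan is to directly track what happens to the pre-order permutation under a single rotation. Since every rotation preserves in-order traversal (as recalled in the preliminaries), the in-order labels on the nodes of $T_1$ and $T_2$ are identical; only the pre-order changes. So it suffices to compare the pre-order of $T_1$ to the pre-order of $T_2$ when $d(T_1,T_2) = 1$ and verify that the resulting rearrangement is a $1$-transposition.

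Suppose first that $T_2$ is obtained from $T_1$ by a right rotation at an internal node $a$ with left child $b$, using the notation of Figure~\ref{fig:TreeRotationPrelims} with $C$, $D$, $E$ the relevant subtrees. Writing $P(X)$ for the pre-order list of the internal labels in the subtree $X$ (which is the empty list when $X$ is just a leaf), the pre-order of the subtree rooted at $a$ in $T_1$ is
$$ a,\ b,\ P(C),\ P(D),\ P(E), $$
while the pre-order of the corresponding subtree (now rooted at $b$) in $T_2$ is
$$ b,\ P(C),\ a,\ P(D),\ P(E). $$
Outside this subtree, $T_1$ and $T_2$ agree pointwise, so $\sigma$ and $\tau$ coincide on all positions outside this block. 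The only change inside the block is that the single label $a$ has been moved past the contiguous segment $b, P(C)$. Letting $i$ be the position of $a$ in $\sigma$ and setting $j = i+1$ and $k = i + 2 + |P(C)|$, this is exactly the transposition $\delta_{i,j,k}$ with first block of length $j - i = 1$, hence a $1$-transposition.

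The case of a left rotation is entirely symmetric: there the pivot label lies at the right end of the affected segment, and the rearrangement takes $b, P(C), a, P(D), P(E)$ to $a, b, P(C), P(D), P(E)$. This corresponds to a transposition $\delta_{i,j,k}$ in which the second block has length $k - j = 1$, which is again a $1$-transposition. In either case the resulting transposition has one of its two blocks of length exactly $1$, so every tree transposition is a $1$-transposition.

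There is no serious obstacle, only two small points of care. First, one must observe that although the subtrees $C$, $D$, $E$ may be single leaves and hence contribute empty segments $P(C), P(D), P(E)$, the expression for $\delta_{i,j,k}$ still makes sense and still describes a valid $1$-transposition. Second, one should explicitly invoke the fact that rotations are local, so that comparing the two pre-orders reduces to comparing the two five-part expressions displayed above rather than analysing the whole tree.
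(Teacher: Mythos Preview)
Your proof is correct and follows essentially the same approach as the paper: both arguments write out the pre-order of the affected subtree before and after a single rotation and observe that exactly one label is displaced across a contiguous block, which is precisely the definition of a $1$-transposition. Your version is slightly more explicit (naming the indices $i,j,k$ and flagging the edge cases where $C$, $D$, or $E$ are leaves), but the underlying idea is identical to the paper's.
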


\begin{proof}
We know that a rotation operation affects the ordering of the subtree rooted at the point of rotation where as it does not affect the ordering at other parts of the tree.
Let $\sigma = AxByCDE$  where $A$ is the pre-order of the tree until $x$ which is the point of rotation ash shown in Figure \ref{fig:TreeTransposition}.

%\begin{minipage}{0.6\linewidth}
\begin{figure}[h]
    \centering
 \hspace*{-6cm}  \Tree[.A  [.x [.B ] [.y [.C ] [.D ] ] ] [.E ] ]~~
	  \caption{Proof of Lemma~\ref{lem-tree-to-one}}
    \label{fig:TreeTransposition}\vspace*{-2mm}
\end{figure}
%\end{minipage}

%\begin{minipage}{0.3\linewidth}
$E$ is the right sub-tree of $A$. $B$ corresponds to the pre-order of the left subtree of $x$. $y$ is on the right of $x$, and the left and right subtrees of $y$ are $C$ and $D$ respectively. We get $\sigma'= AyxBCDE$ by applying a left rotation at $x$. We observe that only one element ($y$) gets shifted, while the relative ordering of others remain the same. Hence this operation correspond to a $1$-transposition.
%\end{minipage}
\end{proof}

Now we prove the equivalence by way of counting the number of tree transpositions and $1$-transpositions separately (see Lemma~\ref{ABOSPcount} and Lemma~\ref{lem:1-transposition-count}) and hence conclude the proof of Theorem~\ref{thm:trasposition-treeperm}.
%We use {\sf TTP} for tree transposition permutation throughout the rest of the paper.

%%%???\begin{appsection}{Counting the Number of Tree Transpositions and 1-transpositions}{appsec:counting-treetransp}
\begin{lemma}
\label{ABOSPcount}
The number of tree transpositions on $n$ elements is $(n-1)^2$.
\end{lemma}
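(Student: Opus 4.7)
The plan is a direct combinatorial enumeration of the distinct tree transpositions, split into the two symmetric cases arising from left and right rotations.

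Building on the calculation in the proof of Lemma~\ref{lem-tree-to-one}, I would first observe that a left rotation at an internal node $x$ at pre-order position $i$, whose left subtree contains $b \ge 0$ internal nodes and whose right child $y$ is internal, produces the transposition $\delta_{i,\,j,\,j+1}$ with $j = i+b+1$. So every left-rotation tree transposition is uniquely specified by the pair $(i,j)$ with $1 \le i < j \le n$. A mirror-image computation for a right rotation at $x$ with internal left child $y$ (whose left subtree contains $c \ge 0$ internal nodes) shows that every right-rotation tree transposition has the form $\delta_{i,\,i+1,\,k}$ with $k = i+c+2$, specified by the pair $(i,k)$ with $1 \le i$ and $i+2 \le k \le n+1$.

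Next I would verify realizability: every pair in these two index sets actually arises as a tree transposition. For a given $(i,j)$ with $1 \le i < j \le n$, construct the full binary tree $T$ on $n$ internal nodes whose top is a right comb on $i-1$ nodes (each with a leaf as its left child), whose bottom node has $x$ as its right child at pre-order position $i$; the left subtree of $x$ is a left comb of $b = j-i-1$ internal nodes (each with a leaf as its right child), and the right subtree of $x$, rooted at $y$, is any full binary tree on the remaining $n-j+1$ internal nodes (say, a right comb). Then $x$ sits at pre-order position $i$, its left subtree has size $b$, and $y$ is an internal right child of $x$, so the left rotation at $x$ is well-defined and produces exactly $\delta_{i,j,j+1}$. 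A mirror-image construction realizes every required pair $(i,k)$ in the right-rotation family.

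To conclude, I would count by inclusion-exclusion. The left-rotation family is indexed by pairs $(i,j)$ with $1 \le i < j \le n$, contributing $\binom{n}{2}$ distinct transpositions. The right-rotation family is indexed by pairs $(i,k)$ with $1 \le i \le n-1$ and $i+2 \le k \le n+1$, also contributing $\binom{n}{2}$. A transposition belongs to both families precisely when $j = i+1$ and $k = j+1$ hold simultaneously, so the overlap is the set $\{\delta_{i,i+1,i+2} : 1 \le i \le n-1\}$ of size $n-1$. The total count is therefore $2\binom{n}{2} - (n-1) = n(n-1) - (n-1) = (n-1)^2$. The main obstacle is the realizability step: one must check that the explicit constructions yield legitimate full binary trees at the boundaries of the index ranges, for instance when $b = 0$, when $j = n$ so the right subtree of $x$ reduces to a single internal node with leaf children, or symmetrically in the right-rotation construction when the right subtree of $x$ is empty.
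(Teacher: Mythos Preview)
Your proposal is correct and follows essentially the same approach as the paper: both arguments count single-element shifts in each direction (your left/right-rotation families are exactly the paper's ``shift $i$ to the right of $j$'' versus ``shift $j$ to the left of $i$''), arrive at $2\binom{n}{2}=n(n-1)$, and subtract the $n-1$ adjacent swaps that lie in both families. Your version is more explicit---you tie the two families to the rotation types via the computation in Lemma~\ref{lem-tree-to-one} and supply concrete realizing trees, whereas the paper asserts realizability in one line---but the structure of the count is identical.
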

\begin{proof}%{Proof of Lemma~\ref{ABOSPcount}}{app:lem:treetransp-count}
We saw that a tree transposition is obtained from a tree permutation by shifting exactly one element $i$ to any other position such that $\sigma(i) \neq i$. There exists tree permutations of length $n$, for which tree transpositions are possible for all $1\leq i \leq n$. The number of ways in which each element can be shifted to form such a pattern is $n-1$. So, for $n$ elements, the total count is $n(n-1)$. But this is over counting. We counted separately on two adjacent elements $i,j$, one for shifting $i$ to the right of $j$ and then for shifting $j$ to the left of $i$. By subtracting the extra counted number of adjacent pair elements we get the total count as $n(n-1) - (n-1) = (n-1)^2$.
\end{proof}

\begin{lemma}
\label{lem:1-transposition-count}
The number of 1-transpositions on $n$ elements is $(n-1)^2$.
\end{lemma}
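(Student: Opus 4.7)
The plan is to mirror the counting strategy of Lemma~\ref{ABOSPcount}. I will parameterize each 1-transposition by the (source, target) pair of positions for the single element that it shifts, count such pairs, then correct for the overcount caused by adjacent swaps.

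Unpacking the definition, when $j = i+1$ the operation $\delta_{i,j,k}$ applied to $\sigma$ picks up the element at position $i$, places it at position $k-1$, and shifts the intervening elements one step to the left; when $k = j+1$ it symmetrically moves the element at position $j$ leftwards to position $i$. Thus every 1-transposition moves exactly one element while preserving the relative order of all the others.

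Given any ordered pair $(p, q)$ of positions with $p \neq q$, there is a unique 1-transposition implementing the move $p \mapsto q$: use $\delta_{p,\, p+1,\, q+1}$ when $p < q$ and $\delta_{q,\, p,\, p+1}$ when $q < p$. Conversely, every 1-transposition corresponds to one such ordered pair by reading off the single displaced element's initial and final positions. There are $n(n-1)$ ordered pairs in total. The only collision happens in the adjacent case: the pairs $(p, p+1)$ and $(p+1, p)$ both yield the same operation $\delta_{p,\, p+1,\, p+2}$, the adjacent swap of positions $p$ and $p+1$. For non-adjacent $(p, q)$ the rightward and leftward shifts are clearly distinct operations since they permute disjoint subsets of positions. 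There are $n-1$ adjacent position pairs, contributing $n-1$ duplicates, so the count of distinct 1-transpositions is $n(n-1) - (n-1) = (n-1)^2$.

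The main subtlety will be convincing oneself of the bijection in the parameterization and of the exact source of overcounting; once this is in place, the arithmetic is immediate. Together with Lemma~\ref{lem-tree-to-one} (every tree transposition is a 1-transposition) and Lemma~\ref{ABOSPcount} (the number of tree transpositions is also $(n-1)^2$), a pigeonhole argument then finishes the proof of Theorem~\ref{thm:trasposition-treeperm}: the containment of tree transpositions in 1-transpositions, combined with equal cardinality, forces equality of the two sets.
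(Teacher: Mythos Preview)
Your proof is correct and follows essentially the same approach as the paper: both parameterize a 1-transposition by the source and target positions of the single displaced element (the paper phrases this as the \emph{left-right} and \emph{right-left} cases), count $n(n-1)$ such ordered pairs, and then subtract the $n-1$ duplicates arising from adjacent swaps to obtain $(n-1)^2$. Your write-up is in fact slightly more explicit than the paper's in exhibiting the concrete $\delta_{i,j,k}$ realizing each move and in arguing that the only collisions occur in the adjacent case.
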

\begin{proof}
%{Proof of ~\ref{lem:1-transposition-count}}{applem:1-transposition-count}
A 1-transposition can be viewed as moving an element at index $i$(resp. $j$) to an element after (resp. before) index $j$(resp. $i$), where $i<j$. Let us name the two cases as {\em left-right} and {\em right-left} respectively. Now, let us count the number of {\em left-right} and {\em right-left} 1-transpositions.

Let the indices be numbered from $1$ to $n$.
For {\em left-right}, after fixing $i$, the $j$ value can range from $i+1$ to $n$. So, for $i=1$, $j$ can take any of the $n-1$ values, for $i=2$, $j$ can take any of the $n-2$ values$\ldots,$ for $i=n-1$, $j$ can take  only one value (which is $n$). Thus there are $\frac{n(n-1)}{2}$ {\em left-right} cases. Similarly, same number of {\em right-left} cases by a similar argument. The total number of both adds up to $n(n-1)$

\medskip
We see that when $i,j$ differ by $1$, we can consider it as either moving the left element to the right of the next element or moving the right element to the left of the previous element. The number of such possibilities are $n-1$. Now by subtracting this over count, we get $n(n-1)-(n-1)$ which is $(n-1)^2$.
\end{proof}
%%% ???\end{appsection}

\subsection{Characterization of skew permutations}

Recall that skew trees are binary trees where each caret node has a leaf as one of the children (in particular, a full binary tree is a skew tree if and only if
every internal node has at least one child which is a leaf ). We now show characterization for the tree permutations corresponding to skew trees.
% by proving Theorem~\ref{minmaxSkew} from the introduction.
The argument is a simple induction on the height of the tree by using the structural property of skew trees.

\introlemmaminmaxskew*

\begin{proof}
We prove the theorem by induction on the height of the tree which is the maximum number of internal nodes in any root to leaf path. As the base case, when the height is $1$, $T$ consists of a single  node, $\sigma (1) = 1 = \min/\max \{1\}$. For the induction, assume that the statement is true for all trees of height $h$, we prove it for height $h+1$. Let $T$ be a tree of height $h+1$.
\begin{description}
\item{\bf Case 1}: When root node in $T$ has a non empty skew tree $T_L$ as the left child. We apply induction hypothesis to $T_L$. Let $\ell = |T_L|$ be the number of internal nodes in $T_L$. By the definition of in-order labeling, root node gets the label $\ell+1$. $\sigma (1) = \ell+1 = \max \{\sigma(1), \sigma(2), \ldots , \sigma(\ell), \sigma(\ell+1)\}$. Since $n = \ell+1$, the tree $T$ satisfies the property.

\item{\bf Case 2}: When root node in $T$ has a non empty skew tree $T_R$ as the right child. Let $\ell = |T_R|$ be the number of internal nodes in $T_R$. By the definition of in-order labeling, root node gets the label $1$ and labels from $2$ to $\ell+1$ will be assigned among the nodes in $T_R$. Hence $\sigma(1) = 1 =  \min \{\sigma(1), \sigma(2), \ldots \sigma(\ell), \sigma(\ell+1)\}$. Since $n = \ell+1$, the tree $T$ satisfies the property.
\end{description}
Since the cases were exhaustive, we have completed the proof.
\end{proof}

%As mentioned in the introduction, the interpretation of trees in the form of permutations yields a new way of looking at the rotation operation as a transposition operation on the tree permutation. See also Appendix~\ref{app:lem-tree-to-one}.

Now we define a relation between two permutations: %\begin{definition}[\textbf{Skew Transformation of Permutations}]
Let $\sigma, \tau$ be two permutations on $n$ elements. $\tau$ is said to be the result of a \textit{skew transposition} of $\sigma$ (denoted $\sigma \sim \tau$) if there is a $i \in [n]$ such that (1)
%\begin{itemize}
%\item
$\sigma(i) = \min/\max~\{\sigma(i),\sigma(i+1)\ldots \sigma(n)\}$.
(2)~$\sigma(i+1) = \min/\max~\{\sigma(i),\sigma(i+1)\ldots \sigma(n)\}$
(3) for all $j \in [n]$:
%\[
$\tau(j) = \sigma(i)$ when $j= i+1$, $\tau(j) = \sigma(i+1)$ when $j = i$ and  $\tau(j) = \sigma(j)$ for all other $j \in [n] \setminus \{i,i+1\}$. We relate skew transpositions with rotation distance for skew trees.

\begin{lemma}
\label{minmaxswap}
Let $T_1, T_2$ be full skew trees with associated tree permutations  $\sigma$ and $\tau$ respectively:
$d(T_1, T_2) = 1 \textrm{ if and only if } \sigma \sim \tau$.
\end{lemma}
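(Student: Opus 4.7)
The plan is to exploit the fact that the internal nodes of a full skew tree $T$ with $n$ internal nodes form a path $v_1 \to v_2 \to \cdots \to v_n$ rooted at $v_1$, where each $v_{i+1}$ is either the left or right child of $v_i$ and the non-path child of each $v_i$ is a leaf. A short induction on $n$ shows that the pre-order traversal of $T$ is simply $v_1, v_2, \ldots, v_n$, so $\sigma(i)$ is the in-order label of $v_i$. Moreover, the subtree rooted at $v_i$ contains exactly the internal nodes $v_i, v_{i+1}, \ldots, v_n$, and since the in-order labels of any subtree's internal nodes form a contiguous integer interval, the suffix $\{\sigma(i), \sigma(i+1), \ldots, \sigma(n)\}$ is a set of consecutive integers. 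Within this interval $\sigma(i)$ is the maximum if $v_{i+1}$ is the left child of $v_i$ and the minimum otherwise, which is consistent with Lemma \ref{minmaxSkew}.

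For the forward direction, I would assume $d(T_1,T_2)=1$ and analyze which rotation on a skew tree yields another skew tree. A right rotation at $v_i$ requires $v_{i+1}$ to be the left child of $v_i$, and a case analysis on the children of $v_{i+1}$ in $T_1$ shows that the resulting tree is skew if and only if the left child of $v_{i+1}$ is a leaf, i.e., $v_{i+2}$ (when it exists) is the right child of $v_{i+1}$. The symmetric statement holds for a left rotation. In both situations, explicitly rewriting the subtree reveals that $v_i$ and $v_{i+1}$ merely exchange their positions along the internal-node path while every other relationship is unchanged, so $\tau$ differs from $\sigma$ exactly by swapping the values at positions $i$ and $i+1$. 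Because $v_i$ and $v_{i+1}$ sit on opposite sides of their respective parents in $T_1$, one of the two values $\sigma(i), \sigma(i+1)$ is the maximum and the other is the minimum of the suffix $\{\sigma(i), \ldots, \sigma(n)\}$, which gives both conditions in the definition of $\sigma \sim \tau$.

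For the converse, I would assume $\sigma \sim \tau$ with the swap at position $i$. Since the suffix $\{\sigma(i), \ldots, \sigma(n)\}$ is a contiguous integer interval and both $\sigma(i)$ and $\sigma(i+1)$ are extremes of it, they must be its minimum and maximum in some order. Translating this back to $T_1$ using the structural correspondence above, $v_{i+1}$ is the left child of $v_i$ when $\sigma(i)$ is the maximum (right child otherwise), and $v_{i+2}$, if it exists, is the opposite child of $v_{i+1}$. This is precisely the configuration under which a single rotation at $v_i$ preserves skewness. Performing that rotation yields a skew tree whose pre-order permutation agrees with $\sigma$ outside positions $i, i+1$ and swaps the two values at those positions, hence equals $\tau$. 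Therefore $d(T_1,T_2) \le 1$, and equality holds since $\sigma \ne \tau$.

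The main technical obstacle lies in the clean classification of which rotations on a skew tree preserve skewness and the bookkeeping of how such a rotation acts on the pre-order sequence; once the skew structure is encoded as the internal-node path together with the binary left/right choices at each step, the correspondence with the skew-transposition relation becomes a mechanical verification.
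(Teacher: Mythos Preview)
Your proposal is correct and follows essentially the same line as the paper's proof: both directions analyze the effect of a single rotation on a skew tree, identify that skewness is preserved precisely when the rotation occurs at an ``angle'' (adjacent indices of opposite min/max type), and verify that such a rotation swaps the two adjacent permutation entries. Your framing via the explicit internal-node path $v_1\to\cdots\to v_n$ and the observation that each suffix $\{\sigma(i),\ldots,\sigma(n)\}$ is a contiguous interval is somewhat more structural than the paper's direct manipulation of the min/max conditions, but the underlying argument is the same.
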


\begin{proof}
We prove the forward direction first. Let $d(T_1, T_2) = 1$. We prove that the corresponding tree permutations $\sigma$ and $\tau$ satisfy $\sigma \sim \tau$.
We know that $T_1$ and $T_2$ are skew trees and hence by Theorem \ref{minmaxSkew}, $\forall i, \sigma(i)=\min/\max\{\sigma(i),\sigma(i+1),\ldots, \sigma(n)\}$. From the property of skew permutations it is known that if $\sigma(i)=\min\{\sigma(i),\sigma(i+1),\ldots, \sigma(n)\}$ then $\sigma(i+1)=\max\{\sigma(i+1),\sigma(i+2),\ldots, \sigma(n)\}$ or vice versa. We verify the two cases below.
Let $\tau$ is the  traversal of $T_1$ obtained by a single flip of a consecutive min-max or max-min pair. We call an index $i$ to be a max-index if the $\sigma(i)$ in the characterization is a max operation and otherwise we call it a min-index.

\begin{description}
\item{\bf Case 1: $i$ is a min-index and $i+1$ is a max-index:}
By definition, we know that, $\sigma(i) = \min\{\sigma(i)$, $\sigma(i+1)\ldots \sigma(n)\}$ and $\sigma(i+1) = \max\{\sigma(i+1), \sigma(i+2)\ldots \sigma(n)\}$
It can be seen that $\tau(i)$ becomes $ \sigma(i+1)$ and  $\tau((i+1))$ becomes $\sigma(i)$.
%In $T_2$ with respect to $T_1$ the following can be observed.
After the rotation, in $T_2$, $\sigma(i)$ will become the root of left subtree of $\sigma(i+1)$ in $T_2$ and $\sigma(i+2)$ will become the root of right subtree of $\sigma(i)$.

With respect to $T_2$,
$\tau(i)  = \sigma(i+1) = \max\left\{\sigma(i+1),\sigma(i+2)\ldots \sigma(n)\right\}$. This, in turn is:
$\max\left\{\sigma(i),\sigma(i+1)\ldots \sigma(n)\right\} = \max\left\{\tau(i),\tau(i+1)\ldots \tau(n)\right\}$.

Similarly, $\tau(i+1) = \sigma(i) = \min\{ \sigma(i), \sigma(i+1) \ldots \sigma(n) \} \setminus \{\sigma(i+1)\} \}$
 Since $\sigma(i) < max\{\sigma(i+2), \sigma(i+3)\ldots \sigma(n)\}$. This, in turn is $\min\{  \tau(i+1), \tau(i+2) \ldots \tau(n)\}$.

\item {\bf Case 2: $i$ is a max-index and $i+1$ is a min-index.}
We know that $\sigma(i) = \max\{\sigma(i), \sigma(i+1)\ldots \sigma(n)\}$ and
$\sigma(i+1) = \min\{\sigma(i+1), \sigma(i+2)\ldots \sigma(n)\}$.

%Since it is known that in {\sf Case 1} $\sigma(i) = min\{\sigma(i), \sigma(i+1)\ldots \sigma(n)\}$ and $\sigma(i+1) = max\{\sigma(i+1), \sigma(i+2)\ldots \sigma(n)\}$, we can see that after the rotation $\tau(i) = min\{\sigma'(i), \sigma'(i+1)\ldots \sigma'(n)\}$ and $\sigma'(i+1) = max\{\sigma'(i+1), \sigma'(i+2)\ldots \sigma'(n)\}$
By similar argument as in {\sf Case 1} $\tau(i)=\sigma(i+1)=\min\{\sigma(i+1),\sigma(i+2)\ldots \sigma(n)\}$ and $\tau(i+1)=\sigma(i)=\max\{\sigma(i),\sigma(i+1)\ldots \sigma(n)$.

\end{description}
We see that in both cases
%Hence in both the cases we see
that the subtree rooted at $\sigma(i+2)$ does not change. It can also be noted that the first $(i-1)$ elements also does not change.
Thus in both cases we get a skew tree and hence $\sigma \sim \tau$.

Now, we prove the reverse direction. Let $\sigma \sim \tau$
and let $T_1,T_2$ be the skew trees corresponding to $\sigma\text{ and } \tau$ respectively.
The key point here is that the first two properties of the skew transposition ensures that if $\sigma(i) = \min\{\sigma(i),\sigma(i+1),\ldots, \sigma(n)\}$ then $\sigma(i+1) = \max\{\sigma(i),\sigma(i+1),\ldots, \sigma(n)$ or vice versa.
From the property (3) of skew transposition,
we have that for all $j \in [n]$:
%\[
%\tau(j) = \left\{
%\begin{array}{lr}
%\sigma(i) & \textrm{ when $j= i+1$ }\\
%\sigma(i+1) & \textrm{ when $j = i$ }\\
%\sigma(j) & \textrm{$\forall j \in [n] \setminus \{i,i+1\}$}
%\end{array}
%\right.
%\]
This ensures that only the adjacent elements at $i,i+1$ gets swapped among $\sigma$ and $\tau$. As we saw in the proof of forward direction $\tau(i) = \sigma(i+1)$ and $\tau(i+1) = \sigma(i)$. It also ensures that if $\tau(i)= \min\{\tau(i),\tau(i+1),\ldots, \tau(n)\}$ then $\tau(i+1)= \max\{\tau(i+1),\tau(i+2),\ldots, \tau(n)$ or vice versa.
If $\sigma(i)$ and $\sigma(i+1)$ are either  $\min/\max \{\sigma(i),\sigma(i+1),\ldots, \sigma(n)$ but not both, then $\tau(i)$ and $\tau(i+1)$ also holds the same property.

We prove that indeed this corresponds to a single skew rotation. Let $\sigma(i)= \min\{\sigma(i),\sigma(i+1),\ldots, \sigma(n)\}$ and $\sigma(i+1)= \max\{\sigma(i),\sigma(i+1),\ldots, \sigma(n)\}$. We now construct a  tree rooted at $\sigma(i)$ satisfying $\sigma$. It has $\sigma(i+1)$ as the root of its right subtree and $\sigma(i+1)$ has $\sigma(i+2)$ as its root of its left subtree. From the definition we know that $\sigma$ corresponds to a skew tree. Let us try to do a left rotation on the tree at $\sigma(i)$. The new tree has $\sigma(i+1)$ as root of a subtree whose left subtree is rooted at $\sigma(i)$ whose right subtree is rooted at $\sigma(i+2)$. The permutation corresponding to this new tree is $\sigma(1)\ldots \sigma(i+1)\sigma(i) \sigma(i+2)\ldots \sigma(n)$. This corresponds to a new permutation $\tau$ where $\tau(i)=\sigma(i+1)$ and $\tau(i+1)=\sigma(i)$.

\medskip
Thus $\sigma,\tau$ corresponds to  skew permutations of a pair of trees that are one rotation distance apart. Hence $d(T_1,T_2) = 1$.
 Similar argument can be used to prove the case where $\sigma(i)= \max\{\sigma(i),\sigma(i+1),\ldots, \sigma(n)\}$ and $\sigma(i+1)= \min\{\sigma(i+1),\sigma(i+2),\ldots, \sigma(n)\}$.
\end{proof}

\noindent {\bf Encoding using a binary string: } We have seen from Theorem \ref{minmaxSkew} that the pre-order traversal of a skew tree with $n$ nodes is precisely characterized by the condition:  $\forall i , \sigma(i) = \min/\max\{ \sigma(i),\sigma(i+1),\ldots, \sigma(n)\}$. Thus, a natural encoding of such a permutation is by considering a string with $0$s at all the min-indices of $\sigma$ and $1$ at all the max indices. Because at the final index n, we have $\sigma(n)=min{\sigma(n)}=max{\sigma(n)}, $ for we can always use either $0$ or $1$ as the last bit of the encoding. As a convention, we will require the encoding to have $a_n$ be the opposite bit of $a_{n-1}$.

\section{Rank $1$ Case: Computing skew rotation distance}
\label{sec:rank-1}

In this section, we present an efficient algorithm for computing the skew rotation distance between two given skew trees to prove the following theorem from the introduction.

\skewtreesalgo*

\begin{proof}
%{Detailed Proof of Theorem~\ref{minmaxswap}}{app:thm:skewalgo}
 Let $T_1,T_2$ be the two given skew trees having $n$ internal nodes each. We use the binary encoded representation for them.
 Let $a = a_1a_2\ldots a_n$ and $b = b_1b_2\ldots b_n$ represent $T_1,T_2$ respectively. Let  the two strings be of length $n$ which equals the number of internal nodes in each tree. We propose the following algorithm (see Algorithm~\ref{algo:binskewrot}).
\begin{algorithm}
\KwIn{Two binary strings $a$ and $b$ of length $n$ representing full skew trees $T_1$ and $ T_2$} respectively with $n$ internal nodes each.\\
\KwOut{count - the skew rotation distance between $T_1$ and $T_2$.}
Set count $=0$;

\While{$a \neq b$}{
  Find the smallest $i$ where $a_i \neq b_i$;

  Find the smallest $j>i$ where $a_j = b_i$;

  \While{$j \neq i$}{
  	%{\sc swapZeroOne($a, j, {j-1}$)}
        Swap $a_j$ and $a_{j-1}$ \\
  	$j = j-1$ \\
  	\textrm{count $=$ count$+1$} \\
  }
  $a_n = a_{n-1} \oplus 1$
}
\Return{\tt count}\;
\caption{{\sc BinarySkewRotation}}
\label{algo:binskewrot}
\end{algorithm}

\noindent Note that operation $a_n = a_{n-1} \oplus 1$ also ensures that $a$ is as per our convention, i.e.  $a_{n-1}\neq a_n$, in particular, when $j=n-1$ in line 10. We also observe that step 4 and 5 are well-defined: Since $T_1 \neq T_2$ (equivalently $a \neq b$), there exists an index $i$ for which $a_i \neq b_i$. Since the last two bits of the encoding are always ensured to be different, there exists $j \neq i$ such that $a_j = b_i$. \\[-3mm]

\noindent {\bf Termination and Runtime Bound:}  We argue termination first.
We argue that after each iteration, the number of indices $i$, such that $a_i = b_i$ strictly increases. Note that the length of the strings remain the same throughout the operation.
%The sum of number of $0s$ and $1s$ remains the same after the swap operation. It is equal to the length $n$ of the string.
$a_{n-2}= 0$ and $a_{n-1} =1 \implies a_n=0$ (similarly, $a_{n-2}=1$ and $a_{n-1}=0 \implies a_n=1$). A swap between $a_{n-2}$ and $a_{n-1}$  represent the new tree with $a_{n}\neq a_{n-1}$. Since $a_n \neq a_{n-1}$ is the condition to be satisfied, $a_n$ is changed to $1$ during the swap operation when $a_{n-1}=0$ and vice versa. Thus, a swap operation involving $n-2$ and  $n-1$ increases or decreases the number of $0$s and $1$s to match with the number of $0$s and $1$s in both binary strings. Since $a_n \neq a_{n-1}$ it is always possible to find an index $j$ such that $a_j = b_i$. The number of swap operations to match a single bit $a_i = b_i$ is at most $n-1$, i.e. $O(n)$. Since the length of the string is $n$, the running time of the algorithm is $O(n^2)$.\\[-3mm]

\noindent{\bf Correctness:}
%We show that the distance computed by the algorithm is exactly the skew rotation distance between the two given skew trees in the Rotation graph.
%Let $k$ be the minimum skew rotation distance  between two given trees $T_1$ and $T_2$. We claim that {\tt count} $=k$ at line 11 of the algorithm.
%can be found out by using a coloring  scheme that we introduce. We assign different colors to each $0/1$ depending upon their relative ordering. Let us consider the case of $0$s. %Let $C_i$ be the color given to the $i^{th}0$  (similarly $1$) in the string.%
The main observation is that a rotation at a node other than an angle node generates a non-skew tree. Hence, the optimal rotation sequence that achieves the above value of $k$ can contain only rotations that correspond to rotation at an angle node. In terms of the encoding, this exactly corresponds to swapping a pair of adjacent symbols that are different (either a $01$ or a $10$). Conversely, any swap between adjacent $0$-$1$ pair (or $1$-$0$ pair) indeed corresponds to a skew rotation in the respective tree. Thus, it suffices to argue that the algorithm finds the optimal number of swaps (between adjacent $0$-$1$ pairs or adjacent $1$-$0$ pairs) to transform $a$ to $b$.
%Note that such a swapping does not change the relative ordering among the $0$s in the string and also among the $1$s in the string.

\medskip
In the algorithm, line 5-9, targets to make $a_i = b_i$, by finding the smallest index $j$ for which $a_j = b_i$ and shifts $a_j$ to the left left by $j-i$ number of $0$-$1$ swaps. Indeed, any optimal sequences that attempts to make $a = b$ must also necessarily make $j-i$ number of $0$-$1$ (or $1$-$0$) swaps since $j$ is the smallest index such that $a_j = b_i$. Hence the sequence of swaps that the algorithm produces must necessarily be an optimal sequence in terms of the number of swaps.
%Thus, summing up the number of such shifts that corresponds to skew rotations give the value of $count$, which is the required minimum skew rotation distance $k$.
%We prove that the algorithm proposed by us also ensures that relative ordering among same valued bits is preserved. The loop invariant condition is that $\forall t<i, a_t = b_t$. Since $j$ is the smallest index where $a_j = b_i$, the relative ordering among bits of same type is preserved. This ensures that only the minimum number of shift operations are encountered during the run of the algorithm. Because of these reasons our proposed algorithm finds the Skew rotation distance between the pair of given skew trees.
\end{proof}

\noindent {\bf Distance to nearest skew tree:} We derive the following additional information about how the skew trees are placed in the rotation graph.
\begin{proposition}
\label{prop:dist-skew-tree}
The rotation distance of a full binary tree  $T$  with $n$ internal nodes and height $h$, to its nearest skew tree is exactly $n-h$.
\end{proposition}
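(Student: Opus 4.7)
The plan is to establish matching upper and lower bounds, both equal to $n-h$.

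For the lower bound, the key ingredient is the fact that a single rotation changes the height of the entire tree by at most $1$. This reduces to a local calculation at the rotation point: for a right rotation at a node $a$ with left child $b$, letting $C,D$ denote the subtrees of $b$ and $E$ the right subtree of $a$ with heights $c,d,e$, the height of the subtree at this position is $1+\max(1+\max(c,d),e)$ before and $1+\max(c,1+\max(d,e))$ after; since these two maxima are over triples that differ coordinatewise by at most one, their values differ by at most one. This bounded change then propagates to the root without amplification, because a parent's height is one more than the maximum of its children's heights and only one child has changed. A simple induction shows that every skew tree on $n$ internal nodes has height exactly $n$, so transforming $T$ into any skew tree requires at least $n-h$ rotations.

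For the upper bound, I would induct on $n-h$. The base case $n=h$ is immediate: a root-to-leaf path of $h=n$ internal nodes contains all the internal nodes, so at each such node the off-path child must be a leaf, and hence $T$ is already a skew tree. For the inductive step with $n>h$, fix a longest root-to-leaf path $P=(v_1,\ldots,v_h)$ in $T$, and for each $j$ let $s_j$ denote the sibling of $v_{j+1}$ (the ``off-path'' child of $v_j$). The hypothesis that $P$ is longest forces $\mathrm{ht}(s_j)\le h-j$ for every $j$, and in particular $s_h$ is a leaf. Since $n>h$, some $s_j$ with $j<h$ must be an internal node. Performing the rotation at $v_j$ that promotes $s_j$ upward (a right rotation if $s_j$ is the left child of $v_j$, a left rotation otherwise) turns the subtree at this position from one of height $h-j+1$ into one of height $h-j+2$. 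Using the bound $\mathrm{ht}(s_i)\le h-i$ for each $i<j$, this height gain propagates all the way up the spine to the root, so the resulting tree has height exactly $h+1$. The induction hypothesis then yields a skew tree in $n-h-1$ further rotations, giving a total of $n-h$.

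The main obstacle I expect is the careful verification in the inductive step that the chosen rotation increases the tree's height by exactly one rather than leaving it unchanged; this hinges on the longest-path guarantee $\mathrm{ht}(s_i)\le h-i$, which prevents any off-path subtree higher up the spine from ``absorbing'' the height increase that originates at $v_j$.
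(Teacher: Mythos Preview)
Your proof is correct and uses the same two-sided strategy as the paper: the lower bound from the fact that one rotation changes the height by at most one (together with the observation that every skew tree on $n$ internal nodes has height $n$), and the upper bound by showing that any non-skew tree admits a rotation increasing its height by exactly one. The paper chooses a different height-increasing rotation---it rotates at the root, pulling the shorter side beneath the taller one, and recurses into the non-leaf side once one side becomes a leaf---rather than your rotation at an interior node $v_j$ on a fixed longest path; the two constructions are interchangeable, and your propagation check via $\mathrm{ht}(s_i)\le h-i$ is a slightly more explicit version of the height-tracking the paper leaves implicit.
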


\begin{proof}
Let $S_1, S_2, \ldots, S_n$ represents the set of internal nodes at levels $1, 2, \ldots, n$ respectively of a full binary tree T. The number of sets  $S_i\neq \phi, 1\leq i \leq n$, is equal to the height $h$ of the tree. In other words $h = $max$(i)|~S_i \neq \phi$. Any set $S_i \neq \phi$ implies that $\forall j, 1\leq j<i, S_j \neq \phi$.
A skew tree is one in which all the nodes are at different levels. Hence for a skew tree $S_n \neq \phi$.  Since there are $n$ internal nodes it is straight forward to see that for a skew tree, $\forall i, 1\leq i \leq n, |S_i| = 1$.
We know that by a single rotation, the height of a tree changes at most by one. Similarly the number of non-empty sets $S_i$ can be increased at most by one by a single rotation. So, if $S_1, S_2,\ldots, S_h$ are the non-empty sets of nodes at each level, it requires at least $n-h$ rotations so that $S_1, S_2,\ldots, S_n$ are all non-empty.

\medskip
We now prove that $n-h$ is indeed the upper bound for the number of rotations to be made to the nearest skew tree. For this, it is sufficient to argue that for every tree, there is a rotation which strictly increases the height of the tree (thus, we can increase the height to get to a height $n$ tree, which is necessarily a skew tree). The idea is to do a rotation at the first non-skew descendent of root. More precisely, let $T$ be a tree on $n$ internal nodes whose height of left subtree and right subtree is represented by $h_L, h_R$ respectively. We do a left (right) rotation at root of $T$ if $h_L\geq h_R~(h_L<h_R)$. This ensures that the height of $T$ gets incremented by $1$ because the value of $h_L~(h_R)$ gets incremented by $1$ whereas that of $h_R~(h_L) $ gets decremented by $1$. The same is applied on $T$ till $h_R~(h_L)$ becomes zero (i.e. the right (left) subtree is a leaf). Now, recursively the same procedure is done starting from left (right) subtree of the $T$. Since at each rotation the height of $T$ gets incremented by 1, a tree of height $n$ is obtained which is necessarily skew. The number of rotations made is at most $n-h$ where $h$ is the height of the initial tree $T$.
\end{proof}

\noindent {\bf Distance to the right comb:} Given the above proposition, it is natural to ask the exact distance between a tree and the special skew trees\textemdash the right and the left comb. The following is an easy proposition using a similar idea as above, which we include with proof, for completeness.

\begin{proposition}
\label{prop:dist-rightcomb-tree}
The rotation distance of a full binary tree  $T$  with $n$ internal nodes and $r$ internal nodes in the rightmost path, to the right comb on $n$ internal nodes, is exactly $n-r$.
\end{proposition}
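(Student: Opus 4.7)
The plan is to prove matching lower and upper bounds on $d(T, R_n)$, where $R_n$ denotes the right comb with $n$ internal nodes. Let $\rho(S)$ denote the number of internal nodes on the rightmost root-to-leaf path of a tree $S$; by assumption $\rho(T) = r$, while $\rho(R_n) = n$ since every internal node of the right comb lies on the rightmost path.

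For the lower bound, I would show that a single rotation changes $\rho$ by at most one, so at least $n - r$ rotations are needed to transform $T$ into $R_n$. A rotation acts locally at an internal node $a$, modifying only the subtree rooted at $a$ and its link to its parent. Hence if $a$ is not on the rightmost path and is not a left child of a rightmost-path node, the rotation does not touch the rightmost path and $\rho$ is unchanged. If $a$ is a left child of a rightmost-path node, then the whole subtree at $a$ hangs off the rightmost path and again $\rho$ is unchanged. The only remaining case is $a$ itself on the rightmost path: a right rotation at $a$ (which is possible only when $a$'s left child $b$ is a caret) inserts $b$ above $a$ on the new rightmost path and raises $\rho$ by exactly one, while a left rotation at $a$ replaces $a$ on the rightmost path by its right child and drops $\rho$ by exactly one. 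Thus no rotation can raise $\rho$ by more than one, giving the lower bound.

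For the upper bound, I would exhibit an explicit $(n - r)$-step procedure. While the current tree is not yet $R_n$, pick any node $a$ on its rightmost path whose left child is a caret and perform a right rotation at $a$. Such a node must exist whenever $\rho < n$: otherwise every rightmost-path node has a leaf as its left child, which together with the single leaf hanging at the bottom of the rightmost path accounts for all $n + 1$ leaves of a full binary tree on $n$ internal nodes, forcing the tree to be $R_n$ already. By the case analysis above each such rotation increases $\rho$ by exactly one, so after $n - r$ rotations we reach $\rho = n$ and the tree must be $R_n$.

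I expect the main obstacle to be the case analysis used for the lower bound: one has to check every possible position of the rotated node (off the rightmost path, as a left child of a rightmost-path node, or on the rightmost path with either a right or a left rotation) and verify that the effect on $\rho$ is $0$ or $\pm 1$. Once this invariant is in place, both bounds follow immediately.
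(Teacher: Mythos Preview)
Your proposal is correct and follows essentially the same approach as the paper: both establish the lower bound via the invariant that a single rotation changes the number of rightmost-path internal nodes by at most one, and both establish the upper bound by repeatedly right-rotating at a rightmost-path node whose left child is internal. Your treatment is in fact more thorough than the paper's, which simply asserts the invariant without your explicit case analysis, and your counting argument for why such a node must exist whenever $\rho<n$ is cleaner than the paper's one-line justification.
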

\begin{proof}
We first prove the lower bound. Let $T$ be the given tree. Let $r$ be the number of internal nodes in the rightmost path in the tree $T$. The right comb has exactly $n$ internal nodes on the rightmost path from root to leaf of the tree. By a single rotation, the number of nodes on the right most path can be increased at most by one (and this is possible only if it is a right rotation with respect to a node in the rightmost path). Hence, any sequence of rotations which transforms $T$ to the right comb must do at least $n-r$ rotations.

\medskip
Now we argue the upper bound. Indeed, the statement follows when $r=n$, since there is only one tree (that is the right comb) which has all $n$ internal nodes in the rightmost path from root to leaf. Consider the case $r<n$; there exists at least one internal node on the right path whose left child is an internal node. A right rotation at this node increases the number of nodes on the right path by $1$ (hence $r$ increases by $1$). By repeating this process, we will reach the tree with $r=n$, which is the right comb.
\end{proof}

\section{Rank bounded paths in the rotation graph}
\label{sec:rotation-rankbounded-reduction}

As described in the introduction, the {\sc RotDist$_r$ } problem asks the following: given two trees $T_1, T_2$ and an integer $k$, with $r = \max\{\rank(T_1), \rank(T_2)\}$, is it possible to covert $T_1$ to $T_2$ with at most $k$ rotations such that intermediate trees obtained during the rotation are all of rank at most $r$? We show now that the general rotation distance problem can be reduced to this version of the problem, thus proving Theorem~\ref{thm:rankboundedrotationdistancereductiontheorem} from the introduction (restated next).

\rankboundedrotationdistancereductiontheorem*
\eject

\begin{proof}
Given two trees $T_1$ and $T_2$ on $n$ internal nodes each, we convert them to $T_1', T_2'$ respectively as follows.

\smallskip
\begin{center}
    \includegraphics[scale=0.6]{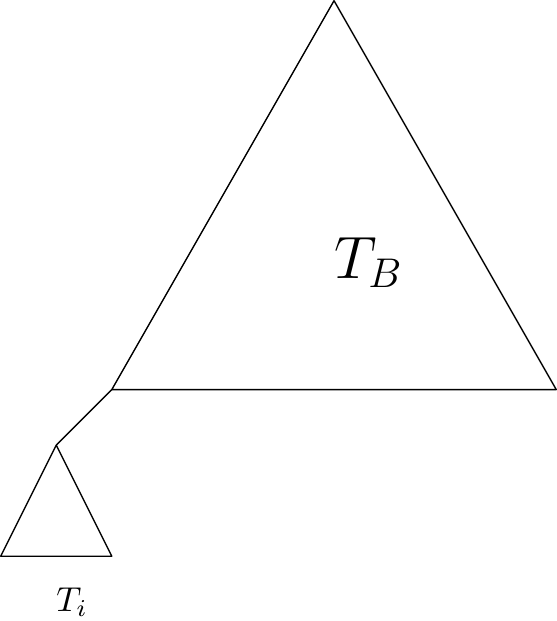}\vspace*{-2mm}
\end{center}
For each $T_i$ ($i \in \{1,2\}$), consider a complete binary tree $T_B$ (which we will refer to as the {\em gadget tree}) of rank $r= \lceil \log_2 n \rceil +1$. $T_i'$ is the tree obtained by attaching $T_i$ to the leftmost leaf $\ell_i$ of the complete binary tree.

\medskip
Note that the rank of these new trees are equal to $r$ irrespective of the ranks of the original trees. Indeed, for $i \in \{1,2\}$ the rank$(T_i')\leq r$ because for it to be $r+1$, we need at least $2^{r+1}$ nodes, but we have only $n+2^r$ nodes in the constructed tree. Also, rank$(T_i')\geq r$ because there is a complete binary tree of height $r$ which is a minor of $T_i'$.
We now prove that, for any $k \in \mathbb{N}$ $d(T_1,T_2) \le k$ if and only if $d_R(T_1',T_2') \le k$.

The forward direction follows from the fact that any rotation of the trees $T_1$ and $T_2$ can be implemented as a rotation inside the subtree rooted at the leftmost leaves $\ell_1$ and $\ell_2$ (in trees $T_1'$ and $T_2'$ respectively). By the argument in the previous paragraph, the rank of the trees that appear in the intermediate stages remains exactly $r$.

To prove the reverse direction, assume $d_R(T_1',T_2') \leq k$. That is, there is a sequence of rotations that transforms $T_1'$ to $T_2'$ such that the intermediate trees in the process have rank at most $r$. By definition, $d(T_1',T_2') \leq k$. Consider the shortest path between $T_1'$ and $T_2'$ in the rotation graph. We use a structural property of the shortest paths in the rotation graph described by Sleator, Tarjan and Thurston \cite{STT86} (see Lemma~\ref{lem:slaetorlemma}). To use this property, we observe that all the internal nodes in the gadget tree $T_B$ in both the trees are common vertices between $T_1'$ and $T_2'$ since the leftmost tree will contain all the $n$ internal nodes of the original tree.

\medskip
By applying Lemma~\ref{lem:slaetorlemma}, we conclude that there is a shortest path between $T_1'$ and $T_2'$ in the rotation graph which does not apply rotation with respect to the internal nodes in the gadget tree $T_B$. Hence, the sequence of rotations demonstrated by such a shortest path can directly be applied on the trees $T_1$ and $T_2$ as well, thus showing that $d(T_1, T_2) \le k$.
%\qed
\end{proof}

Now we prove the upper bound on the rank-bounded rotation distance mentioned in the introduction.

\rankdistbound*
%
%\begin{proof}[Sketch]
%We outline the proof here and defer the details to the appendix~\ref{appthm:rankdistbound}. The idea is to reduce the rank of the given tree down to $1$ by rotation and then apply the skew rotation distance bound from the previous section. In the appendix~\ref{appthm:rankdistbound}, we show how to reduce the rank of the tree by $1$ by a sequence of rotations ensuring that the rank of the tree does not increase in the intermediate stages.\qed
%\end{proof}

\begin{proof}
We start with a special case of the above bound where $r_1 = r_2 = 1$. That is, $T_1$ and $T_2$ are skew trees. We show a distance upper bound in this case.
%This follows from the algorithm described in the previous section.
\begin{lemma}
\label{prop:skewrotationdistance}
For any two skew trees $T_1$ and $T_2$, we have that $d_R(T_1,T_2) \le n^2$.
\end{lemma}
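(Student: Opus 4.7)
The plan is to deduce the bound directly from the binary string encoding of skew trees developed at the end of Section~\ref{sec:tree-permutations} and the algorithm {\sc BinarySkewRotation} of Theorem~\ref{thmalgo:skewtrees}. By Lemma~\ref{minmaxswap} combined with Theorem~\ref{minmaxSkew}, a single skew rotation between two skew trees corresponds exactly to swapping a pair of adjacent unequal bits (a $01\to 10$ or $10\to 01$ swap) in their binary encodings, and vice versa. Hence, if $a$ and $b$ are the length-$n$ encodings of $T_1$ and $T_2$, then $d_R(T_1,T_2)$ equals the minimum number of adjacent unequal-bit swaps needed to transform $a$ into $b$, and this minimum is exactly the value \texttt{count} returned by {\sc BinarySkewRotation}.

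Thus it suffices to bound the total number of elementary swaps performed by {\sc BinarySkewRotation}. The first step is to observe that the outer \textbf{while} loop makes at most $n-1$ iterations: after an iteration that targets the smallest mismatch position $i$, the prefix $a_1\ldots a_i$ agrees with $b_1\ldots b_i$, and subsequent iterations consider only strictly larger mismatch indices. (The last-bit bookkeeping $a_n=a_{n-1}\oplus 1$ is a notational convention that does not itself count as a rotation, so it does not contribute to \texttt{count}.) The second step is to bound the inner \textbf{while} loop: in the iteration handling the mismatch at position $i$, the chosen index $j$ satisfies $i<j\le n$, and the loop performs exactly $j-i\le n-1$ swaps. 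Combining the two bounds yields
\[
d_R(T_1,T_2)\;=\;\texttt{count}\;\le\;(n-1)\cdot(n-1)\;\le\;n^2,
\]
which is the claimed inequality.

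I do not anticipate a real obstacle here; the one point to make carefully is that the algorithm's output is provably optimal (this is exactly what Theorem~\ref{thmalgo:skewtrees} establishes), so the counting bound on \texttt{count} transfers to an upper bound on $d_R(T_1,T_2)$ rather than merely on the cost of one particular strategy. A slight subtlety worth flagging is that the convention $a_n\ne a_{n-1}$ must be maintained throughout, but this is precisely ensured by line~10 of Algorithm~\ref{algo:binskewrot}, so every intermediate configuration remains a valid skew-tree encoding and corresponds to a genuine intermediate skew tree. With these two observations in place, the bound $d_R(T_1,T_2)\le n^2$ follows immediately.
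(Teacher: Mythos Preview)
Your proposal is correct and follows essentially the same approach as the paper: encode the skew trees as length-$n$ binary strings, observe that each skew rotation is an adjacent unequal-bit swap, and bound the total number of swaps performed by {\sc BinarySkewRotation} by (at most $n$ iterations) $\times$ (at most $n$ swaps each). One small remark: you invoke the \emph{optimality} of the algorithm to transfer the bound to $d_R$, but this is unnecessary---since $d_R$ is a minimum, \emph{any} valid sequence of skew rotations (in particular the one produced by the algorithm) already yields an upper bound, so the existence of the sequence alone suffices.
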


\begin{proof}
Let $a,b$ be the two $n$ length binary strings representing $T_1,T_2$. We know that in order to make any $a_i=b_i$, the maximum number of swaps (corresponding to skew rotation) that needs to be done is at most $n$. Since the number of bits in the string is $n$, the maximum number of swaps, there by the maximum number of skew rotations to be performed is at most $n^2$.
\end{proof}

Towards proving the upper bound in the general case, we establish an upper bound on the number of rotations required to reduce the rank by $1$.
\begin{lemma}
\label{lem:n2-rotations-to-reduce-rank}
For any full binary tree $T$ on $n$ internal nodes with rank $r$, $r\geq 2$, we can reduce the rank of $T$ to $r-1$ by using at most $2n^3+n^2$ rotations, such that all the intermediate trees obtained in the sequence also have rank at most $r$.
\end{lemma}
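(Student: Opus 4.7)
The plan is to induct on the rank $r$, reducing the task to the same problem on a strictly smaller subtree at each step. Begin with a structural observation: since a node with two rank-$r$ children would itself have rank $r+1$, every rank-$r$ node has at most one rank-$r$ child, and hence the rank-$r$ nodes of $T$ form a single path descending from the root, terminating at a unique \emph{apex} $v^*$ whose two children both have rank exactly $r-1$. Let $L$ and $R$ denote the subtrees rooted at the children of $v^*$. The overall strategy is: to lower $\mathrm{rank}(T)$ by one, it suffices to lower $\mathrm{rank}(L)$ (or $\mathrm{rank}(R)$) by one, because then $\mathrm{rank}(v^*) = \max(r-2, r-1) = r-1$, and the reduction propagates up to the root.

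For the inductive step ($r \ge 3$), the subtree $L$ has rank $r-1 \ge 2$ and at most $n-1$ internal nodes, so the inductive hypothesis yields a rotation sequence of length at most $2(n-1)^3 + (n-1)^2 < 2n^3 + n^2$ inside $L$ that lowers $\mathrm{rank}(L)$ to $r-2$, with every intermediate subtree having rank at most $r-1$. When these rotations are performed inside $T$, $R$ and the side subtrees hanging off the rank-$r$ spine above $v^*$ remain untouched, so by the definition of rank (max of the children's ranks, plus one when they are equal), the rank of $v^*$ stays at most $r$ at every step, and hence so do the ranks of all ancestors of $v^*$. At the end $\mathrm{rank}(L) \le r-2$, so $\mathrm{rank}(v^*) = r-1$ and therefore $\mathrm{rank}(T) \le r-1$.

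The base case $r = 2$ is the crux. Both children of $v^*$ are then non-empty skew (rank-$1$) subtrees, and no rotation can shrink one of them below rank $1$, so a global restructuring is required. The plan here is to drive $T$ to \emph{some} skew tree using at most $n^2$ rotations, each staying within rank $2$. Using the structural description of rank-$2$ trees (a skew-like spine carrying skew side subtrees, culminating in an apex with two skew children) together with the permutation encoding and the skew characterization of Lemma~\ref{minmaxSkew}, the scheme resolves rank-$2$ obstructions one at a time via short bursts of $O(n)$ carefully chosen rotations, giving an overall budget of $O(n^2) \le 2n^3 + n^2$. The main technical obstacle is verifying that every move in this schedule keeps the intermediate tree within rank $2$: the rank-$2$ corridor is narrow, and the swap machinery of Section~\ref{sec:rank-1} must be extended to operate on non-skew rank-$2$ inputs without transiently creating a rank-$3$ subtree at the apex or along its spine.
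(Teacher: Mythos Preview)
Your inductive step contains a real gap. You correctly observe that reducing $\rank(L)$ from $r-1$ to $r-2$ makes $\rank(v^*)=r-1$, but the claim that this ``propagates up to the root'' is false in general. Along the rank-$r$ spine each node $u_i$ has an off-spine child $w_i$ of rank at most $r-1$. After your modification the on-spine child of $u_{k-1}$ (namely $v^*$) has rank $r-1$; if $\rank(w_{k-1})=r-1$ as well, then $u_{k-1}$ has two children of equal rank $r-1$ and hence still has rank $r$. Concretely, for $r=3$: let the root's right subtree be a complete binary tree of height $2$ (rank $2$) and let the root's left child be the apex $v^*$ with two rank-$2$ children $L,R$. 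After you drop $\rank(L)$ to $1$, $v^*$ has rank $2$, but the root now has two rank-$2$ children and therefore still rank $3$. Repairing this would force you to relocate the new apex and re-apply the hypothesis repeatedly up the spine, and the total cost is then no longer controlled by a single invocation of the inductive bound $2(n-1)^3+(n-1)^2$.

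The paper sidesteps this by not inducting on $r$. It locates \emph{all} bottom-most rank-$2$ nodes (those whose two children are skew), and for each such $v$ it (i) uses Lemma~\ref{prop:skewrotationdistance} to turn the left child into a left comb and the right child into a right comb, then (ii) right-rotates at $v$ until the whole subtree is a right comb; both phases are checked to keep that subtree at rank at most $2$. Once every bottom-most rank-$2$ subtree is a comb, every former rank-$i$ node becomes rank $i-1$, so the whole tree drops one rank in a single pass. Summing $2n^2+n$ over at most $n$ such nodes gives $2n^3+n^2$. Your base case $r=2$ gestures toward this kind of argument but does not carry it out: you label it the ``main technical obstacle'' and leave the actual rotation schedule and the rank-$2$ preservation check unspecified. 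The explicit combification at each bottom-most rank-$2$ node, together with the verification that neither phase leaves rank $2$, is precisely the content that is missing.
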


\begin{proof}
In order to prove the upper bound, we show a rotation sequence which satisfies the required constraints. We find the rank 2 nodes in each path of the tree which are farthest from the root and reduce the rank of the subtree rooted at them to 1 by appropriate operations. We also ensure that during the operation none of the intermediate trees generated has a rank more than the initial rank of the tree.

At each rank 2 node $v$ in each path of the tree which are farthest from the root, we do this two step process: (1) apply Lemma~\ref{prop:skewrotationdistance} at $v$ to convert the left-subtree to the left-comb and the right-subtree to the right-comb. (2) perform repeated right rotations at node $v$ to convert subtree rooted at $v$ to a right comb. Note that step (1) does not change the rank of $v$ because the left and right subtree of $v$ always are maintained to be skew trees. And after step (1) is done, the subtree at $v$ has the property that the left subtree  is a left-comb and the right subtree is a right-comb (which we will call the left-right-comb property). Step (2) ensures that the tree still has the left-right-comb property until the tree rooted at $v$ itself becomes a right comb.

Let $T$ be a tree of rank $r\geq 2$. We now prove that the above procedure ensures reduction in rank of $T$ by 1.  Consider the sets $S_1,\ldots,S_r$ where $S_i$ is the set of internal nodes of rank $i$. We know that when $T$ is of rank $r$, by definition $S_i\neq \phi, 1\leq i \leq r$. By applying steps (1) and (2) for all nodes in $S_2$ that are farthest from the root in each path, let the tree so obtained be $T'$. Now let $S_1',\ldots,S_r'$ where $S_i'$ is the set of internal nodes of rank $i$ in $T'$. Since all nodes in $S_2$ are changed to rank 1 by the above two steps, it can be easily verified that $S_1' = S_1 \cup S_2$. Now consider the nodes in $S_3$. In $T'$, all nodes of rank 2 in $T$ were changed to rank 1. Thus, all nodes in $S_3$ have rank 2 in $T'$ and forms $S_2'$. By the same argument, $\forall i \le r$, $S'_{i-1} = S_i$. Indeed, this ensures $S_r' = \phi$ which makes the rank of $T'$ to be of rank $r-1$, i.e. rank($T'$) = rank($T$)$-1$.

We now bound the number of rotations performed in the above process. For the given tree $T$ of rank $r$, let $\ell = |S_2| \le n$. For each $v \in S_2$, the step (1) of the above process (Lemma~\ref{prop:skewrotationdistance}) costs at most $n^2$ rotations for the two children, and then step (2) costs at most $n$ rotations. Thus, the number of rotations performed is at most $2n^2+n$ for processing the vertex $v$. In total, the number of rotations is at most $\ell(2n^2+n) \le 2n^3+n^2$. This completes the proof of the Lemma~\ref{lem:n2-rotations-to-reduce-rank}.
\end{proof}

We will now prove Theorem~\ref{thm:rankdistbound}.
By multiple applications of Lemma \ref{lem:n2-rotations-to-reduce-rank} we can conclude that any full binary tree $T_1$ on n
internal nodes with rank $r_1$ can be transformed into a tree $T_1'$ of rank 1 using at most
$(r_1 - 1)(2n^3 + n^2)$ rotations with no intermediate tree having rank exceeding $r_1$ (and similarly for
$T_2, r_2, T_2'$). By Lemma \ref{prop:skewrotationdistance}, we can transform between
any two trees $T_1'$ and $T_2'$ of rank $1$ each on $n$ internal nodes each with at most $n^2$ rotations with all
intermediate trees having rank $1$. Thus,
we can transform from $T_1$ to $T_1'$, then $T_1'$  to $T_2'$ and finally from $T_2'$ to $T_2$ in at most
$n^2 + (r_1 - 1)(2n^3 + n^2) + (r_2 - 1)(2n^3 + n^2)$ rotations where every intermediate tree having
rank at most max(rank$(T_1)$, rank$(T_2)$). Simplifying gives the final expression that $d_R(T_1,T_2) \le n^2 (1+(2n+1)(r_1+r_2-2))$ where $r = \max\{r_1,r_2\}$. This completes the proof of the theorem.
\end{proof}

%\noindent \textbf{Tightness:}
We remark that Theorem~\ref{thm:rankdistbound} is asymptotically tight for $r=1$. We show examples of two skew trees where the shortest path between them in the rotation graph is of length $\Omega(n^2)$. We use the characterization from the previous sections to prove this.

\begin{proposition}
\label{prop:skewrotdistance}
There exists two full skew trees $T_1$ and $T_2$ with $n$ internal nodes each, such that their skew rotation distance is at least $\frac{n(n-1)}{2}$
\end{proposition}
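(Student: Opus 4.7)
The plan is to exhibit an explicit pair of skew trees realising the claimed bound and then use the permutation encoding developed in Section~\ref{sec:tree-permutations} as an invariant. I would take $T_1$ to be the right comb (every internal node has a leaf as its left child) and $T_2$ to be the left comb (every internal node has a leaf as its right child), each on $n$ internal nodes. Both trees are skew, so the skew rotation distance between them is well-defined (and is at most $n^2$ by Lemma~\ref{prop:skewrotationdistance}). A short induction on $n$, unrolling the recursive definitions of in-order labelling and pre-order traversal, shows that the associated tree permutations are $\sigma_1 = (1, 2, \ldots, n)$ for the right comb and $\sigma_2 = (n, n-1, \ldots, 1)$ for the left comb.

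Next, I would use Lemma~\ref{minmaxswap} as the bridge between rotations and permutations. Combined with the definition of a skew transposition, that lemma says that a single skew rotation of a skew tree corresponds to swapping two \emph{adjacent} entries $\sigma(i), \sigma(i+1)$ of the associated permutation. Because swapping two adjacent entries of a permutation changes its number of inversions by exactly $\pm 1$, any sequence of $k$ skew rotations taking $T_1$ to $T_2$ induces a sequence of $k$ permutations whose inversion counts form a unit-step walk from $\mathrm{inv}(\sigma_1) = 0$ to $\mathrm{inv}(\sigma_2) = \binom{n}{2}$. This forces $k \geq \binom{n}{2} = \frac{n(n-1)}{2}$, which is exactly the claimed lower bound.

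The main conceptual point, and the only thing requiring real care, is that the restriction to \emph{skew} rotations is precisely what lets the inversion count serve as a sharp monovariant: arbitrary tree rotations correspond to more general $1$-transpositions by Theorem~\ref{thm:trasposition-treeperm} and could alter inversions by a large amount in a single step, but skew rotations cannot. I do not foresee any serious obstacles beyond the routine calculations of the two pre-order labellings and the $\pm 1$ change in inversion count under an adjacent swap. As a bonus, this $\frac{n(n-1)}{2}$ lower bound matches the $n^2$ upper bound of Lemma~\ref{prop:skewrotationdistance} up to a constant, confirming the tightness claim made for the $r = 1$ case of Theorem~\ref{thm:rankdistbound}.
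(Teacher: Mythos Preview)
Your proof is correct and uses the same witnessing pair (the right comb and the left comb) as the paper, but your lower-bound argument is genuinely different from the paper's. The paper works in the binary-string encoding: the two combs become $a=0^{n-1}1$ and $b=1^{n-1}0$, and the bound $\frac{n(n-1)}{2}$ is obtained by tracing Algorithm~\ref{algo:binskewrot} on this input and invoking the algorithm's optimality (proved earlier in Theorem~\ref{thmalgo:skewtrees}). You instead stay in the permutation encoding, observe via Lemma~\ref{minmaxswap} that each skew rotation is an adjacent transposition, and use the inversion count as a monovariant that moves by exactly $\pm 1$ per step; since $\sigma_1$ has $0$ inversions and $\sigma_2$ has $\binom{n}{2}$, the bound follows immediately.

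Your route is more self-contained: it does not appeal to the correctness or optimality of Algorithm~\ref{algo:binskewrot}, and the inversion-count argument is a standard, transparent lower-bound technique. The paper's route, on the other hand, actually computes the exact distance (not just a lower bound) by exhibiting the algorithm's run, at the cost of depending on the earlier proof that the algorithm is optimal. Your remark that general tree rotations are $1$-transpositions (Theorem~\ref{thm:trasposition-treeperm}) and can therefore change the inversion count by more than one, so the argument genuinely needs the skew restriction, is a nice clarification that the paper leaves implicit.
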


\begin{proof}
Let $T_1, T_2$ be the right-comb and left comb respectively of $n$ internal nodes each. As per our convention we know that the binary strings corresponding are $a = 0^{n-1}1$ and $b = 1^{n-1}0$ respectively. In order to ensure that the intermediate trees along the shortest rotation distance path are of rank at most $1$, rotation is possible at node $i$, if $\sigma(i)$ = min$\{\sigma(i),\ldots, \sigma(n)$ and $\sigma(i+1)$ = max$\{\sigma(i+1),\ldots, \sigma(n)$ or vice versa. Algorithm \ref{algo:binskewrot} can be used to find the skew rotation distance between $T_1$ and $T_2$. In order to make $a_1 = b_1$, it can be verified that the number of flips  that corresponds to a rotation is $n-1$. Similarly, in order to make $a_2 = b_2$, $n-2$ rotations are necessary. Continuing this way, we see that the total number of rotations required to make $a=b$ is $n-1+\ldots +1 = n(n-1)/2$.
\end{proof}

\section{Height restricted paths in the rotation graph}
\label{sec:rotation-heightbounded-reduction}
In this brief section, we show that a similar argument to Theorem~\ref{thm:rankboundedrotationdistancereductiontheorem} described in the previous section, can also be used to establish that another structurally restricted instance of the problem is also as hard as the general case.

\medskip
We define the restriction first.
Consider two full binary trees $T_1$ and $T_2$ of $n$ internal nodes each (of height at most $h$). We define a path from $T_1$ to $T_2$ in the rotation graph (${\cal R}_n)$ to be a \textit{height-bounded path} if each intermediate tree that appears in the path has height at most $h$. We define the height bounded rotation distance, $d_H(T_1,T_2)$, to be the length of the shortest height bounded path between the two trees $T_1$ and $T_2$ of height at most $h$. The $\HROTDIST$ problem is defined as:
\ourproblem{$\HROTDIST$: Given $h$, $k$ and two full binary trees $T_1$ and $T_2$ of height at most $h$, if the height bounded rotation distance between them is at most $k$ or not. That is, is $d_H(T_1,T_2) \le k$? }

Notice that by one rotation, the height of the tree can change at most by 1.

\heightboundedrotationdistancereductiontheorem*

\begin{proof}
Given two $n$ node trees $T_1,T_2$, we convert them to $T_1',T_2'$ respectively as follows. For each $T_i$ ($i \in \{1,2\}$), consider a right comb of height $n+1$. We obtain $T_i'$ by attaching $T_i$ as the left subtree of the root of right comb. The height of the tree remains bounded above by $n+1$, as long as rotation with respect to the root of $T_i'$ is not performed because the total number of internal nodes in $T_i$ is $n$.

\medskip
We now prove that, for any $k \in \mathbb{N}$, $d(T_1,T_2) \le k$ if and only if $d_H(T_1',T_2') \le k$. We use a structural property of the shortest paths in the rotation graph described by Sleator, Tarjan and Thurston \cite{STT86} (see Lemma~\ref{lem:slaetorlemma}) to prove the theorem. We observe that  all internal nodes in the right comb of both $T_1'$ and $T_2'$ are common vertices. The forward direction follows from the fact that any rotation of the trees $T_1$ and $T_2$ can be implemented as a rotation on the left subtrees of $T_1'$ and $T_2'$ respectively. By the argument in the previous paragraph, the height of the trees that appear in the intermediate stages remains exactly $n+1$, as the right comb part of the trees in the intermediate stages remains unaltered and thus $d_H(T_1',T_2') \leq k$.

\medskip
To prove the reverse direction, assume $d_H(T_1',T_2') \leq k$. That is, there is a sequence of rotations that transforms $T_1'$ to $T_2'$ such that the intermediate trees in the process have height at most $n+1$. Note that the number of intermediate trees in the process is less than $k$. By using the lemma described by Sleator, Tarjan and Thurston, there exists a sequence of at most $k$ rotations in the rotation graph that transforms $T_1'$ to $T_2'$ in which all the intermediate trees have the common right comb. Hence the sequence of rotations demonstrated by such a shortest path can directly be applied to trees $T_1$ and $T_2$ as well, thus showing that $d(T_1,T_2) \leq k$.
\end{proof}

\section{Tree polynomials and rotation distance}
\label{sec:tree-polynomials}

As mentioned in the introduction, with every  full binary tree $T$, we associate a bivariate polynomial $p_T(x,y) \in \mathbb{R}[x,y]$. Each edge of the tree is labelled with the variable $x$ and $y$ as follows: for an edge $(u,v)$ (with $u$ as the parent and $v$ as the child), we label it with $x$ (resp. $y$) if $v$ is the left child (resp. right child) of $u$. A path in the tree naturally gives rise to a monomial which is the product of the labels (each are variables $x$ or $y$) of edges in the path. The polynomial associated with the  full binary tree $T$ is the sum of the monomials corresponding to each root to leaf path.

A bivariate polynomial $p(x,y)$ is called a {\em tree polynomial} if there is a full binary tree $T$ such that $p(x,y)$ is exactly $p_T(x,y)$. This was studied by Wiley and Grey~\cite{WG14} where they described a necessary and sufficient condition for a polynomial to be a tree polynomial.

\medskip
We describe their result in this context for completeness. Let $p(x,y)$ be a bivariate polynomial with total degree at most $d$, of the following form: $p(x,y) = \sum_{\substack{a,b \in \mathbb{N}\\ 0 \le a,b \le d}} c_{ab}~x^ay^b$.
For example the polynomial $5x^2+7xy+2y^2+y+6$ would result in the following column vector.
\[ \small
    \begin{matrix}
        x^2 \\
        xy \\
        y^2 \\
        x \\
        y \\
        1
    \end{matrix}
    \begin{bmatrix}
        5\\7\\2\\0\\1\\6
    \end{bmatrix}
\] \normalsize
The coefficient vector of the polynomial $p(x,y)$ is defined as the column vector of length\footnote{Indeed, for each $i$, there are $i+1$ ways to choose non-negative integer exponents $a$ and $b$ such that $a+b=i$, and since the exponents can sum to any integer between $0$ and $d$, there are $\sum_{i=0}^d (i+1) = \frac{(d+1)(d+2)}{2}$ choices of exponents (and thus this many possible coefficients).} $\frac{(d+1)(d+2)}{2}$ whose entries are the coefficients of $p$, ordered by the decreasing degree and within each degree in the order of the decreasing powers of $x$. More precisely, the entry of $v$ that corresponds to $x^ay^b$ at index $(a,b)$ where $a+b=d$, is $c_{ab}$.
%As an example, for $d=3$, the polynomial $5x^2+7xy+6y^2+9$ has the following vector representation of length $10$ - $(0,

Let $\Delta_d$ be the vector of length $d+1$ with $i$-th entry as ${d \choose i-1}$.
%Let $A^{(d)}$ be a matrix of order $d \times d$,  and
Let $A_d$ be the block matrix with $(d+1)$ rows and $(d+1)(k+1)$ columns defined as  follows: $A_d = \left[ A^{(d)} \mid A^{(d-1)} \mid \cdots \mid A^{(1)} \mid A^{(0)} \right]$ - where $A^{(k)}$ is a matrix of order $(d+1) \times (k+1)$ defined as follows : For $1 \le i \le d+1$, and $1 \le j \le k+1$, the $(i,j)$-th entry of $A^{(k)}$ is defined as: $a_{ij}^k = {d-k \choose i-j}$ if $j \le i \le j+d-k$, and $0$ otherwise.

\begin{theorem}[\cite{WG14}]
Let $p(x,y)$ be a bivariate polynomial of degree $d$ with non-negative coefficients, with a coefficient vector $v$. Let $A_d$ be the corresponding matrix defined above. Then, $p$ is a tree polynomial if and only if $\left( A_d \right) v = \Delta_d$ and
$$c_{ab} \le {a+b \choose a} - \sum_{\substack{0 \le i+j \le a+b \\ i \le  a \\ j \le b}} c_{ij}{a+b-i-j \choose a-i}$$
\end{theorem}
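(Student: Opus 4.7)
The key observation is that the matrix equation $A_d v = \Delta_d$ is an algebraic encoding of the polynomial identity
\[
P(x,y) \;:=\; \sum_{a,b} c_{ab}\, x^a y^b (x+y)^{d-a-b} \;=\; (x+y)^d.
\]
First I would verify this reformulation by computing that the $(i,j)$-th entry of the block $A^{(k)}$ equals the coefficient of $x^{d-i+1} y^{i-1}$ in $x^{k-j+1} y^{j-1} (x+y)^{d-k}$; both evaluate to $\binom{d-k}{i-j}$ when $0 \le i-j \le d-k$ and to $0$ otherwise. Under the prescribed ordering of $v$ by decreasing total degree (and within each degree by decreasing $x$-power), the product $A_d v$ is then precisely the coefficient vector of $P(x,y)$, while $\Delta_d$ is the coefficient vector of $(x+y)^d$. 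Hence the matrix equation is equivalent to the polynomial identity above, a ``refined Kraft'' identity for full binary trees.

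With this reformulation, the forward direction is combinatorial. Given a full binary tree $T$ of height at most $d$, extend $T$ to a complete binary tree $T^*$ of depth exactly $d$ by attaching, below every leaf $\ell$ of $T$ at depth $k = a(\ell) + b(\ell) < d$, a complete binary subtree of depth $d-k$. Each original leaf $\ell$ then contributes exactly $x^{a(\ell)} y^{b(\ell)}(x+y)^{d-k}$ to $p_{T^*}$, so $p_{T^*} = P$; since $T^*$ is a complete binary tree of depth $d$, we also have $p_{T^*} = (x+y)^d$, establishing the identity. For the inequality, interpret $\binom{a+b}{a}$ as the number of distinct nodes at depth $a+b$ with $a$ left edges in a complete binary tree of depth $a+b$; each leaf $\ell$ at shallower depth $(i,j)$ with $i \le a$ and $j \le b$ blocks exactly $\binom{a+b-i-j}{a-i}$ of these positions (the nodes reachable under $\ell$'s would-be subtree with the right left/right counts), giving the stated upper bound on $c_{ab}$.

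For the reverse direction I would argue by strong induction on $d$, using the tree polynomial recurrence $p_T = x \, p_{T_L} + y \, p_{T_R}$. Given $p$ satisfying both conditions, the goal is to produce a splitting $c_{ab} = c_{ab}^L + c_{ab}^R$ (with $c_{a0}^L = c_{a0}$, $c_{0b}^R = c_{0b}$, and $c_{00}=0$ unless $p \equiv 1$) such that the polynomials $q_L(x,y) = \sum c_{ab}^L x^{a-1} y^b$ and $q_R(x,y) = \sum c_{ab}^R x^a y^{b-1}$ separately satisfy the theorem's conditions in degree $d-1$. The identity $(x+y)^d = x(x+y)^{d-1} + y(x+y)^{d-1}$ suggests the natural target: choose $(c_{ab}^L)$ so that $\sum c_{ab}^L x^a y^b (x+y)^{d-a-b} = x(x+y)^{d-1}$, a linear system on the mixed-monomial splits, with $(c_{ab}^R)$ automatically determined from $c_{ab} = c_{ab}^L + c_{ab}^R$. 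Once such a splitting is produced, the inductive hypothesis yields trees $T_L$ and $T_R$ realizing $q_L$ and $q_R$, and joining them at a common root gives a full binary tree $T$ with $p_T = p$.

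The main obstacle is the existence step in the induction: showing that a nonnegative integer splitting respecting $0 \le c_{ab}^L \le c_{ab}$ can always be found. Here the inequality condition on $p$ supplies precisely the slack needed. The plan is to process the monomials in increasing order of $a+b$: the inequality at level $a+b$ bounds $c_{ab}$ in terms of contributions from lower-depth leaves, which translates (after splitting) into matching inequalities at depth $a+b-1$ for $q_L$ and $q_R$. Verifying that this greedy/inductive splitting is feasible under the hypotheses, and that both resulting polynomials satisfy the degree-$(d-1)$ inequality, is the technical heart of the argument; it is essentially a bookkeeping exercise showing that the combinatorial ``available positions'' at each depth split correctly between the left and right subtrees.
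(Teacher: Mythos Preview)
The paper does not prove this theorem; it is quoted from \cite{WG14} and stated only ``for completeness'' before the discussion turns to skew polynomials. There is therefore no in-paper proof to compare your proposal against.

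On its own merits: your reformulation of $A_d v = \Delta_d$ as the polynomial identity $\sum_{a,b} c_{ab}\, x^a y^b (x+y)^{d-a-b} = (x+y)^d$ is correct and is the natural way to read the block structure of $A_d$, and your forward direction (completing $T$ to a depth-$d$ complete tree for the identity, and the position-blocking interpretation for the inequality) is sound. The reverse direction, however, remains only a sketch. You correctly isolate the crux---producing a nonnegative integer splitting $c_{ab} = c_{ab}^L + c_{ab}^R$ so that both $q_L$ and $q_R$ again satisfy the hypotheses in degree $d-1$---but the single linear constraint you impose underdetermines the splitting, and you do not actually exhibit a feasible choice that propagates the inequality to both halves. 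Calling this ``essentially a bookkeeping exercise'' is optimistic: this is exactly where the inequality hypothesis must do real work, and without a concrete greedy rule together with a proof that it stays within the bounds $0 \le c_{ab}^L \le c_{ab}$ while simultaneously yielding the degree-$(d-1)$ inequalities for $q_L$ and $q_R$, the argument is incomplete rather than routine.
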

Thus, the polynomial is a tree polynomial if the corresponding coefficient vector is in an affine space defined by $Av = b$ with certain restrictions on the entries of the vector.

\begin{figure}[!h]
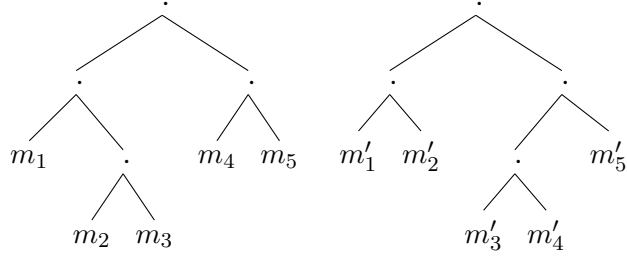

\vspace*{-2mm}
    \centering
   \scalebox{0.9}{ \Tree [.~. [.~. [.$m_1$ ] [.~. [.$m_2$ ] [.$m_3$ ] ] ] [.~. [.$m_4$ ] [.$m_5$ ] ] ]~~~\Tree [.~. [.~. [.$m_1'$ ] [.$m_2'$ ] ] [.~. [.~. [.$m_3'$ ] [.$m_4'$ ] ] [.$m_5'$ ] ] ] }
    \caption{Two distinct  full binary trees with same tree polynomial. Here $m_1=m_1', m_2=m_3', m_3=m_4', m_4=m_2', m_5=m_5' $}\vspace*{-2mm}
    \label{fig:PolygonNonUniqueness}
\end{figure}

\eject

A natural approach is to use the connection between tree polynomials to explore the rotation distance problem. A first question is to ask if the polynomial uniquely describes the tree. Unfortunately, it is not the case as we describe below and depict in Figure \ref{fig:PolygonNonUniqueness}.

\begin{proposition}
For any $n \ge 4$, there exists two distinct  full binary trees $T_1$ and $T_2$ with $n$ internal nodes, such that $p_{T_1}$ and $p_{T_2}$ are identical.
\end{proposition}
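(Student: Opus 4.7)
The plan is to verify the explicit example depicted in Figure~\ref{fig:PolygonNonUniqueness} for the base case $n=4$, and then bootstrap to every larger $n$ by grafting a common subtree onto corresponding leaves of the two trees.

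For $n=4$, I would simply label every left edge by $x$ and every right edge by $y$ and read off the five root-to-leaf monomials in each displayed tree. The first tree yields the multiset $\{x^2,\, x^2y,\, xy^2,\, xy,\, y^2\}$ and the second tree yields $\{x^2,\, xy,\, x^2y,\, xy^2,\, y^2\}$; these multisets agree, so the two tree polynomials are both equal to $x^2 + xy + y^2 + x^2y + xy^2$. The trees are however clearly non-isomorphic as rooted plane trees (for example, the left subtree of the root contains three leaves in the first tree but only two in the second), so this yields the required pair at $n=4$.

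For $n>4$ the plan is to bootstrap from the $n=4$ example by means of the following observation: in the pair above each of the five root-to-leaf monomials occurs exactly once, so there is a canonical bijection $\phi$ between the leaves of $T_1$ and the leaves of $T_2$ that matches leaves carrying the same monomial (this is precisely the matching indicated in the caption of Figure~\ref{fig:PolygonNonUniqueness}). Given $n>4$, I would fix any full binary tree $S$ on $n-4$ internal nodes, pick any leaf $\ell$ of $T_1$, and form $T_1^\star$ by attaching $S$ at $\ell$ and $T_2^\star$ by attaching a copy of the same $S$ at $\phi(\ell)$ in $T_2$. Each of $T_1^\star$ and $T_2^\star$ then has exactly $n$ internal nodes.

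To finish, the key point is that grafting $S$ at a leaf whose root-to-leaf monomial is $\mu$ in the host tree replaces the summand $\mu$ of the tree polynomial by $\mu\cdot p_S(x,y)$ and leaves every other root-to-leaf monomial unchanged. Because $\ell$ and $\phi(\ell)$ share the same $\mu$, and the remaining leaves of $T_1$ and $T_2$ contribute identical multisets of monomials by the $n=4$ computation, it follows at once that $p_{T_1^\star}=p_{T_2^\star}$. Distinctness of $T_1^\star$ and $T_2^\star$ is inherited from the $n=4$ case, since the two trees coincide with $T_1$ and $T_2$ respectively outside of the grafted copy of $S$. The argument is essentially a verification followed by a one-line bootstrapping step, so I do not foresee a real obstacle; the only minor bookkeeping is to track the leaf-monomial bijection $\phi$ correctly when performing the graft.
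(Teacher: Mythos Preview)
Your approach is sound, but the distinctness step has a gap you need to close. You write ``pick any leaf $\ell$'' and justify $T_1^\star\ne T_2^\star$ by saying ``the two trees coincide with $T_1$ and $T_2$ respectively outside of the grafted copy of $S$''; that region is not well-defined when $S$ is attached at structurally different positions in the two hosts, and in fact the claim fails for some $\ell$. Concretely, take $n=5$ (so $S$ is a single caret) and $\ell=m_4$, the $xy$-leaf at position right--left in $T_1$; then $\phi(\ell)=m_2'$ sits at position left--right in $T_2$, and after grafting both $T_1^\star$ and $T_2^\star$ are the \emph{same} tree, with left subtree of shape $(\mathrm{leaf},(\mathrm{leaf},\mathrm{leaf}))$ and right subtree of shape $((\mathrm{leaf},\mathrm{leaf}),\mathrm{leaf})$. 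The fix is simply to commit to a particular leaf: take $\ell=m_1$ (monomial $x^2$), so that $\phi(\ell)=m_1'$ sits at the same position left--left; then the right child of the root's left child is the internal node $C$ in $T_1^\star$ but the leaf $m_2'$ in $T_2^\star$, for every choice of $S$, and distinctness follows uniformly.

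With that repair your argument is correct and somewhat different from the paper's. The paper does not bootstrap from $n=4$; instead, for each $n$ it starts from a single tree on $n-1$ internal nodes that already has two leaves carrying the \emph{same} monomial (e.g.\ a root with two caret children contributes $xy$ twice), and then expands each of those two leaves by a single caret to obtain the desired pair. Your grafting construction is in a sense dual: you attach the same subtree at matched leaves of two different hosts, whereas the paper attaches a single caret at two matched leaves of one host. Your version has the small advantage of reaching every $n$ from the one explicit base case in a single step.
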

\begin{proof}
For any $n \ge 4$, there are trees for which at least two leaves representing the same monomial. For such a tree $T$, let $p_T = m_1 + m_2 + \ldots + m_n$ be the polynomial representing a tree with $n$ leaves. Let $m_i = m_j$ be a repeating monomial. We now construct two different trees $T_1,T_2$ by replacing the leaf node represented by $m_i,m_j$ respectively by an internal node with two children. Note that Any leaf represented by monomial $m$, after replacement with a node generates two monomials $xm$ and $ym$. Thus, $p_{T_1} = m_1+m_2+\ldots xm_i+ym_i+\ldots m_n$ and  $p_{T_2} = m_1+m_2+\ldots xm_j+ym_j+\ldots m_n$. $T_1=T_2$ as $m_i=m_j$. Clearly, $p_{T_1} \ne p_{T_2}$, but $T_1$ and $T_2$ are distinct trees as we constructed them by altering two different leaf nodes of the same tree.
\end{proof}

Note that the above example in Figure \ref{fig:PolygonNonUniqueness} that we have constructed is of rank $2$ and hence we cannot use the polynomial representation for rotation distance even when restricted to rank $2$. However, the situation is better when the rank of the tree is restricted to $1$, which, as we described earlier, is corresponds exactly to skew trees. We study this below.

\paragraph{Skew Polynomials:}
A bivariate polynomial $p(x,y)$ is said to be a {\em skew polynomial} if there is a full skew binary tree whose tree polynomial is $p(x,y)$. A first question is to characterize them.

\smallskip
The following necessary conditions are easy to observe: if $d$ is the height of the tree, there are exactly two monomials of degree $d$ in the skew polynomial and for every other degree, there is exactly one monomial. But, this is not sufficient as observed by the following example $x+x^2+y^2$ which cannot be a polynomial for any full skew tree. If we put an additional condition, then this indeed gives a characterization as we show below.

\begin{proposition}[\textbf{Characterization of Skew Polynomials}]
A bivariate polynomial $p(x,y)$ represents a full skew tree with $n$ internal nodes if and only if
\begin{enumerate}[label=(\arabic*)]
\itemsep 0pt
\item the degree of the polynomial is $n$ with $n+1$ terms, in which there is a unique monomial of degree $i$ for every $i \in [n-1]$, and there exists two monomials of degree $n$.
\item $\forall i, 1< i < n$, if $m_i$ and $m_{i-1}$ the monomial of degree $i$ and $i-1$ respectively, then $gcd(m_i, m_{i-1})$ has degree $i-1$ or $i-2$.
\item for $m_n, m_{n+1}$ of degree $n$ either gcd($\frac{m_n}{x}, \frac{m_{n+1}}{y}$) or gcd($\frac{m_n}{y}, \frac{m_{n+1}}{x}$) has degree $n-1$.
\end{enumerate}
\end{proposition}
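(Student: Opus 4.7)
The plan is to exploit the spine structure of a full skew tree in both directions. A full skew tree on $n$ internal nodes has its internal nodes arranged in a chain $v_1, v_2, \ldots, v_n$ where $v_1$ is the root and, for each $i < n$, the node $v_i$ has one leaf child via an edge labelled $\ell_i \in \{x,y\}$ and one internal child $v_{i+1}$ via the complementary edge labelled $\bar{\ell}_i$; the node $v_n$ has two leaf children via edges $x$ and $y$. Write $\pi_i$ for the monomial obtained by multiplying the labels along the root-to-$v_i$ path, so that $\pi_1 = 1$ and $\pi_{i+1} = \pi_i \cdot \bar{\ell}_i$.

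For the forward direction, I would directly read off the tree polynomial from this spine. The leaf hanging off $v_i$ for $i < n$ contributes $m_i = \pi_i \cdot \ell_i$, while the two leaves of $v_n$ contribute $m_n = \pi_n \cdot x$ and $m_{n+1} = \pi_n \cdot y$. Condition (1) is immediate from counting leaves at each depth. Condition (3) follows since $m_n/x = \pi_n = m_{n+1}/y$, so their gcd equals $\pi_n$, a monomial of degree $n-1$. For condition (2), substituting $m_{i-1} = \pi_{i-1}\ell_{i-1}$ and $m_i = \pi_{i-1}\bar{\ell}_{i-1}\ell_i$ gives $\gcd(m_{i-1}, m_i) = \pi_{i-1} \cdot \gcd(\ell_{i-1}, \bar{\ell}_{i-1}\ell_i)$, which equals $m_{i-1}$ (degree $i-1$) when $\ell_i = \ell_{i-1}$ and equals $\pi_{i-1}$ (degree $i-2$) when $\ell_i = \bar{\ell}_{i-1}$.

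For the reverse direction, given a polynomial satisfying (1), (2), (3), I would reconstruct the spine iteratively and then assemble the tree. By condition (3), after suitably designating the two degree-$n$ monomials as $m_n$ and $m_{n+1}$, there is a monomial $\pi_n$ of degree $n-1$ with $m_n = \pi_n \cdot x$ and $m_{n+1} = \pi_n \cdot y$. For $i$ from $n-1$ down to $1$, define $\pi_i = \gcd(\pi_{i+1}, m_i)$ and $\ell_i = m_i/\pi_i$; I would prove by downward induction that $\pi_i$ is a monomial of degree exactly $i-1$, that $\ell_i \in \{x,y\}$, and that $\pi_{i+1}/\pi_i = \bar{\ell}_i$ is the complementary variable. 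After verifying $\pi_1 = 1$, the spine $v_1, \ldots, v_n$ with these edge labels yields a skew tree whose polynomial equals the input by the forward-direction formulas.

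The main obstacle is the downward induction in the reverse direction: one must show that each $\gcd(\pi_{i+1}, m_i)$ is a monomial of degree exactly $i-1$ and that the derived pair $\{\ell_i, \bar{\ell}_i\}$ equals $\{x, y\}$. The inductive step crucially uses condition (2) to control how $m_i$ overlaps with the already-reconstructed portion of the spine: the degree $i-1$ versus $i-2$ dichotomy for $\gcd(m_{i-1}, m_i)$ dictates whether consecutive leaf-edge labels agree or differ, which in turn forces $\pi_{i+1}/\pi_i$ to be a single variable complementary to $\ell_i$. Once this invariant is maintained throughout the iteration and $\pi_1 = 1$ is confirmed at the base, verifying that the assembled tree produces the stated polynomial is routine.
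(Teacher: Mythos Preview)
Your forward direction matches the paper's, just with cleaner spine notation $\pi_i,\ell_i$; the paper argues the same two cases (consecutive leaves on the same vs.\ opposite side) informally. For the reverse direction the paper builds the spine upward starting from $m_1$, whereas you build it downward starting from $\pi_n$; that difference in orientation is harmless, but the downward induction has a genuine gap. Condition~(2) controls $\gcd(m_{i+1},m_i)=\gcd(\pi_{i+1}\ell_{i+1},\,m_i)$, not $\gcd(\pi_{i+1},m_i)$, and when the former has degree $i$ there are \emph{two} degree-$i$ divisors of $\pi_{i+1}\ell_{i+1}$, one of which is $\pi_{i+1}$ itself. Nothing in (1)--(3) rules out $m_i=\pi_{i+1}$, and in that case $\gcd(\pi_{i+1},m_i)$ has degree $i$ rather than $i-1$, so your invariant breaks. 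Concretely, for $n=3$ take $p(x,y)=x+xy+x^2y+xy^2$. Condition~(1) is clear; condition~(2) involves only $i=2$, and $\gcd(xy,x)=x$ has degree $1=i-1$; condition~(3) holds since $\gcd(x^2y/x,\ xy^2/y)=xy$ has degree $2=n-1$, giving $\pi_3=xy$. But then $\gcd(\pi_3,m_2)=\gcd(xy,xy)=xy$ has degree $2$, not $1$, and your reconstruction fails --- and indeed $p$ is not the polynomial of any full skew tree, since $m_1=x$ forces $\pi_2=y$ and hence $\pi_3=y^2$, so the two degree-$3$ terms would have to be $xy^2$ and $y^3$.

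This is really a defect in the stated hypotheses rather than in your plan: conditions (1)--(3) as written never tie the chain $m_1,\ldots,m_{n-1}$ to the common factor $\pi_n$ coming from the two degree-$n$ monomials, so the reverse implication is false without an extra constraint. The paper's upward construction has the mirror-image gap --- it builds the spine from $m_1,\ldots,m_{n-1}$ and then simply asserts that the two degree-$n$ terms ``correspond to'' the leaves of the last node, without ever checking that they equal $\pi_n x$ and $\pi_n y$ for the $\pi_n$ it has produced. In the counterexample above, the paper's procedure yields $\pi_3=y^2$ and then silently fails the same compatibility check.
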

\begin{proof}
We prove the reverse direction of the claim first (i.e. assume  a polynomial $p$ satisfying {\it (1)}, {\it(2)} and {\it(3)}.
By definition we know that at least one child of an internal node of a full skew tree is a leaf node. This will ensure that the polynomial corresponding to the tree consists of terms of all degrees from $1$ to $n$. In addition, we know that there exists two leaf nodes at the bottom most level. This create one extra term of degree $n$, thus producing a polynomial with $n+1$ terms.

\medskip
We now discuss about the relation between two terms having a degree difference of $1$ among them. We know that the grandparent of a leaf node at height $i$ and the parent of the leaf node at height $i-1$ are same. This implies that the two monomials representing these have at least $i-2$ terms in common. There are two possible ways in which leaves at the consecutive levels can appear.
\begin{description}
\item{\bf Case 1: $i^{th}$ leaf is right (resp. left) and $(i-1)^{th}$ leaf is left (resp. right)}
The term representing $(i-1)^{th}$ leaf will have one extra term  $x$ (resp $y$) in the term and the one representing $i^{th} $ leaf will have two extra $y$ (resp. $x$) with respect to the term that represents its first common ancestor node if it is a leaf. Thus here the gcd of the two terms will have a degree difference of $2$ comparing with the $i^{th} $ term.

\item{\bf Case 2: $i^{th}$ as well as $(i-1)^{th}$ leaves are either both left or both right}
If both leaves are on the left (resp. right), both will have the extra factor of $x$ (resp. $y$) in the term comparing with the first common ancestor node. The difference in degree for $i^{th}$ leaf is achieved by one extra $y$(resp. $x$) with respect to the $(i-1)^{th} ~$ term. This is the cause for a degree difference of just $1$ between the $i^{th}$ term and $(i-1)^{th} $term.
\end{description}

Suppose the $i^{th}$ degree term is $x^ay^b$. It can be  noted that, if the degree of the highest common factor is $i-2$, then the common factor is either $x^{a}y^{b-2}$ or $x^{a-2}y^{b}$ and not $x^{a-1}y^{b-1}$.
The two monomials $m_n, m_{n+1}$ of degree $n$ each have the property that either gcd($\frac{m_n}{x}, \frac{m_{n+1}}{y}$) or gcd($\frac{m_n}{y}, \frac{m_{n+1}}{x}$) has degree $n-1$. These two correspond to the leaves of the internal node at the last level $n$.

\medskip
Now we prove the forward direction of the claim. We show how to construct a full skew tree to a polynomial having  the above mentioned property. We know that the polynomial is of degree $n$ with $n+1$ terms. We also know that there exists exactly one term of all degrees from $1$ to $n-1$. The term having degree $1$ is either $x$ or $y$. If the degree $1$ term is $x$ (resp. $y$), then we construct a tree whose left (resp. right) child of root node is a leaf. For any term of the form $x^a y^b$, the possible terms with one degree less is one among $x^ay^{b-1}, x^{a-1}y^b, x^{a+1}y^{b-2}, x^{a-2}y^{b+1}$. If the gcd of $m_i$ and $m_{i-1}$ has degree $1$, then the leaf at height $i$ is on the same side as that on $(i-1)^{th}$. If the gcd of these terms has degree $2$, then the leaf at height $i$ is on the opposite side with respect to the leaf at height $(i-1)$. This way the skew tree that represent the polynomial is constructed. As there are two terms with degree $n$ present in the polynomial, those two corresponds to the two leaves of the node at height $n$. This way we create the unique skew tree from the given polynomial with the above mentioned properties.
\end{proof}

\paragraph{Algorithm for Skew Rotation Distance using Skew Polynomials:}
We notice now that the previous algorithm can alternatively be viewed in terms of polynomials as well. We call a non-root internal node (other than internal node $n$)in a skew tree an {\em angle node} if the leaf child is on the left (or resp. right) and the leaf child of its parent is on the right (or resp. left). Given a skew polynomial, a term $x^ay^b$ represents the leaf of an angle node implies that the polynomial should contain either the term $x^{a+1}y^{b-2}$ or $x^{a-2}y^{b+1}$. Internal node at level $n-1$ is always considered as an angle node since rotation is possible with respect to it with the skewness property maintained. The internal node at level $n$ is not an angle node as rotation creates a non-full skew tree.
We know that the skew polynomial contains two terms of degree $n$. If $p$ represents such a polynomial, $p_n$ and $p_{n+1}$ represents these two. The parent of these two monomials is the node at level $n$. We propose an algorithm to find the minimum skew rotation distance between two full skew trees and prove it.

\begin{theorem}
{\sc SkewPolynomialRotationDistance} (Algorithm~\ref{algo:polyskewrot}) algorithm outputs the minimum skew rotation distance between two skew trees given their corresponding skew polynomials.
\end{theorem}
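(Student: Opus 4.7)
The plan is to reduce correctness of {\sc SkewPolynomialRotationDistance} to the already-established correctness of Algorithm~\ref{algo:binskewrot} by exhibiting a faithful bijection between the polynomial representation of a skew tree and its binary-string encoding, and then verifying that the two algorithms perform corresponding operations step-by-step. Concretely, first I would fix the bijection: given a skew polynomial satisfying the three characterization conditions, list its monomials in increasing degree as $m_1, m_2, \ldots, m_{n-1}$ together with the pair $m_n, m_{n+1}$ of degree $n$, and define $a_i \in \{0,1\}$ according to whether the gcd condition (2) between $m_i$ and $m_{i-1}$ is of degree $i-1$ (same side as the previous leaf, so $a_i = a_{i-1}$) or of degree $i-2$ (opposite side, so $a_i \ne a_{i-1}$). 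The top-level bit $a_1$ is determined by whether the degree-$1$ monomial is $x$ or $y$, and the bits $a_{n-1}, a_n$ are obtained from condition (3) on the pair $(m_n, m_{n+1})$, which also enforces the convention $a_{n-1} \ne a_n$ required by Algorithm~\ref{algo:binskewrot}.

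Second, I would argue that the polynomial-level operation carried out by {\sc SkewPolynomialRotationDistance} at an angle-node leaf (replacing a monomial $x^a y^b$ that appears together with $x^{a+1} y^{b-2}$ or $x^{a-2} y^{b+1}$, and propagating the induced change to the two degree-$n$ monomials at the bottom level) corresponds exactly to swapping the adjacent bits $a_i, a_{i+1}$ in the binary encoding when those two bits have opposite values. The key points to check here are that such a polynomial-level operation leaves all other monomials and all other gcd-degree relations undisturbed, and that rotation at any non-angle node would produce a polynomial violating one of the characterization conditions (and thus corresponds to a non-skew tree). Together this shows that the legal moves of the polynomial algorithm are in one-to-one correspondence with skew rotations and with the legal adjacent-swap moves of Algorithm~\ref{algo:binskewrot}.

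Third, with the bijection and the move-for-move correspondence in hand, any execution of {\sc SkewPolynomialRotationDistance} on input polynomials $p_{T_1}, p_{T_2}$ can be simulated by {\sc BinarySkewRotation} on the corresponding strings $a, b$, and vice versa, yielding the same count. Since Theorem~\ref{thmalgo:skewtrees} proves that Algorithm~\ref{algo:binskewrot} outputs the minimum number of skew rotations, the polynomial algorithm produces the same optimal value $d_R(T_1, T_2)$, completing the proof.

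The main obstacle I anticipate is the careful bookkeeping at the bottom of the tree: the two degree-$n$ monomials $m_n, m_{n+1}$ are the only ones that come as a pair, and the internal node at level $n$ is explicitly excluded from being an angle node. One must therefore verify separately that a swap involving the bit $a_{n-1}$ corresponds correctly to updating the pair $(m_n, m_{n+1})$ in a way that preserves condition (3), and that the convention $a_{n-1} \ne a_n$ is automatically restored after each polynomial operation, mirroring the line $a_n := a_{n-1} \oplus 1$ in Algorithm~\ref{algo:binskewrot}. Once this boundary case is handled, the rest of the correspondence is a routine induction on the level at which the operation occurs.
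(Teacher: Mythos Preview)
Your overall strategy---reduce correctness of Algorithm~\ref{algo:polyskewrot} to the already-proved correctness of Algorithm~\ref{algo:binskewrot} by setting up a bijection between skew polynomials and binary strings and checking that the moves correspond---is exactly the paper's approach. The paper's own proof is in fact terser than yours: it simply asserts that ``Algorithm~\ref{algo:polyskewrot} more or less imitates Algorithm~\ref{algo:binskewrot}'' in a different representation, that rotation at an angle node preserves skewness, and that matching monomials from lowest degree upward mirrors matching bits left-to-right.

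There is one structural mismatch your ``step-by-step simulation'' claim glosses over. Algorithm~\ref{algo:binskewrot} modifies only the string $a$, always swapping inside $a$ until it equals $b$. Algorithm~\ref{algo:polyskewrot}, by contrast, may call \texttt{doRotation} on \emph{either} $p$ or $q$ at each step, depending on which side currently has the angle node at the first index of disagreement; the two polynomials meet somewhere in the middle. So an execution of Algorithm~\ref{algo:polyskewrot} is not literally simulated move-for-move by an execution of Algorithm~\ref{algo:binskewrot} on the corresponding strings. To close this, you should observe that skew rotations are invertible and that a rotation applied to $q$ at level $i$ contributes the same unit of distance as the inverse rotation applied to $p$; equivalently, the concatenation of the rotation sequence on $p$ with the reversed rotation sequence on $q$ yields a valid skew rotation path from $T_1$ to $T_2$ of the same length. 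Once that is said, both algorithms are computing the length of a shortest path in the same graph by the same greedy rule (fix the first discrepancy via the nearest angle node), and your appeal to Theorem~\ref{thmalgo:skewtrees} goes through. The paper does not spell this out either, but your proposal explicitly promises a bijective simulation, so you should patch the claim.
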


\begin{algorithm}[h]
\KwIn{Two skew polynomials $p$ and $q$ of length $n+1$, representing the trees $T_1$ and $T_2$.}
\KwOut{Skew rotation distance between $T_1$ and $T_2$}

{\tt count = 0}

    \While{$p_1\neq q_1$}{
    Find smallest $i$ where $p_i$ or $q_i$ represents an angle term.

        \eIf{$p_i$ {\tt represents an angle node}}{
            {\tt doRotation(p,i)}

            {\tt count = count + 1}

        }
        {
            {\tt doRotation(q,i)}

            {\tt count = count + 1}
        }

    }

\While{$p \neq q$}{
    {\tt i = 1}

    \While{i $<$ n} {
    \eIf{$p_i \neq q_i$}{
        \eIf{$p_i$ {\tt represents an angle node}}
        {
            \tt doRotation(p,i)

            {\tt count = count+1}

            break
        }
        {
            \tt doRotation(q,i)

            {\tt count = count+1}

            break
        }

   }
   {
        i = i+1
   }
   }
}
\Return{\tt count}\;
\caption{{\sc SkewPolynomialRotationDistance}}
\label{algo:polyskewrot}
\end{algorithm}

\begin{proof}
Given two full skew trees $T_1, T_2$ on $n$ internal nodes represented by  skew tree polynomials $p$, $q$ respectively, we use Algorithm~\ref{algo:polyskewrot} to find the minimum skew rotation distance between them. The algorithm runs until the input skew tree polynomials $p$ and $q$ matches. Here, both the polynomials gets altered to an intermediate full skew tree that lie on a shortest skewness maintained path. The procedure $doRotation(p,i)$  (Algorithm~\ref{algo:rotation@i}) is used to alter the skew tree polynomial $p$ to the one that correspond to the one after rotation at internal node $i$.

\begin{algorithm}[h]
\KwIn{A skew polynomial $p= \sum_{j=1}^{n+1} m_j $, representing $T$ and an index $i$ the point of rotation}
\KwOut{Skew polynomial corresponding after a single rotation}
      \If{$(m_i/m_{i-1})==x^{-1}y^2$}{
        $temp = m_i$ \\
        $m_i = m_{i+1}/x$\\
        $m_{i+1} = temp*y$\\
       }
       \If{$(m_i/m_{i-1})==x^2y^{-1}$}{
        $temp = m_i$ \\
        $m_i = m_{i+1}/ x$\\
        $m_{i+1} = temp*x$
       }
\Return{}\;
\caption{{\tt doRotation(p,i)}}
\label{algo:rotation@i}
\end{algorithm}

The objective is to find the minimum skew rotation distance from the corresponding polynomials representing full skew trees $T_1$ and $T_2$. Our algorithm works by comparing nodes in increasing height. This is achieved by considering terms in  ascending order of its degree. As a first step, the algorithm  makes the first term identical. This is important as it may be the case that the input polynomials are representing right comb and left comb. In this case a rotation at root node may not generate a skew tree, thus violating the skewness property.  But, we know that rotation between the last two nodes in a skew tree does not change the skewness property. Thus it is always the case that our algorithm  finds an angle node.

By convention the second last node is considered as an angle node. We now argue that the algorithm terminates and outputs a count.

\noindent {\bf Termination:} If both skew trees are different, then  the polynomials that represents them are also different. So there exists two monomials of same degree which are different. Our algorithm  finds the nearest angle node nearest from the root. That is, it identifies the lowest degree monomials which are different. Then it changes the polynomial that correspond to a rotation, thereby matching the corresponding monomials. The algorithm terminates when both the polynomials become identical and outputs the number of rotations made.

\noindent {\bf Correctness:} Algorithm \ref{algo:polyskewrot} more or less imitates
Algorithm \ref{algo:binskewrot}. The difference is on the representation of full skew binary trees. A rotation at angle node ensures that the skewness property is maintained. Since polynomials generated after each rotation correspond to a skew tree and as in Algorithm \ref{algo:binskewrot}, rotation is done on the nearest angle node from the root where both trees differ, minimum number of rotations are done. Thus the algorithm outputs the minimum skew rotation distance.
\end{proof}

\section{Discussion and open problems}
\label{sec:openproblems}
In this paper, we studied algorithmic aspects of rotation distance using rank as an underlying parameter. A degenerate case turned out to be the case when the rank is $1$, and in this case we provided a polynomial time algorithm which determines the shortest rotation sequence transforming one skew tree to another, ensuring that all the intermediate trees are also skew. We list down the following points related to the exploration which also leads to open problems.

\begin{description}
\item{\bf $\NP$-hardness of Rotation Distance Problem:}  We observed that the operation of rotation can be interpreted as special transpositions on permutations. Noticing that sorting by transpositions is an $\NP$-hard problem, this connection may help in establishing the $\NP$-hardness of the problem.
\item{\bf Reducing the problem to special cases:} We noted that there is a reduction from the general rotation distance problem to the case where the given trees are of a particular rank $r$ and we have to find the shortest path which preserves the rank of the intermediate trees as well to be exactly $r$. Similarly, we noted that there is a reduction from the general rotation distance problem to the case where the given trees are of height $h$ and we have to find the shortest path which preserves the height of the intermediate trees as well to be exactly $h$. It is an interesting open problem to see if this additional information/restriction about rank or height can be used to design algorithms.
% A similar reduction is possible with respect to height of the tree as well, as shown in Theorem~\ref{thm:rankboundedrotationdistancereductiontheorem}. In fact, it is possible to apply the height and rank restrictions simultaneously - to show that the following problem is as hard as the general version of rotation distance problem. {\em Given two trees $T_1$ and $T_2$, both of height $h$ and rank $r$, find the length of the shortest path between the two trees in the rotation graph where all the intermediate trees are of height $h$ and rank $r$?} This variant of the problem might lead to more useful attack on this notoriously difficult problem.
\end{description}

\subsection*{Acknowledgement:} We thank the anonymous reviewers for the extensive review comments which improved the readability and for pointing out gaps in an earlier version of the proof of Lemma~\ref{lem:n2-rotations-to-reduce-rank}.

\end{document}